\newcommand{\rev}[1]{\textcolor{black}{#1}}
\newcommand{\nrev}[1]{\textcolor{black}{#1}}
\newcommand{\supply}[1]{\textcolor{black}{#1}}
\newtheorem{theorem}{Theorem}
\newtheorem{corollary}{Corollary}
\newtheorem{lemma}{Lemma}
\newcommand{\E}[1]{\mathbb{E}[#1]}
\begin{document}
\title{Improved Interactive Protocol for Synchronizing From Deletions
\vspace{-0.5em}} 

% %%% Single author, or several authors with same affiliation:
% \author{%
%  \IEEEauthorblockN{Author 1 and Author 2}
% \IEEEauthorblockA{Department of Statistics and Data Science\\
%                    University 1\\
 %                   City 1\\
  %                  Email: author1@university1.edu}% }

%%% Several authors with up to three affiliations:
% \author{%
%   \IEEEauthorblockN{Haolun (Michael) Ni, Lev Tauz and Lara Dolecek}
%   \IEEEauthorblockA{Department of Electrical and Computer Engineering \\
%                     University of California, Los Angeles\\
%                     Los Angeles\\
%                     Email: %missing%
%                     }
%   \and
%   \IEEEauthorblockN{Ryan Gabrys}
%   \IEEEauthorblockA{California Institute for Telecommunication \\and Information Technology\\ 
%   University of California, San Diego
%                     San Diego\\
%                     Email: %missing%
%                     }
% }

\author{\IEEEauthorblockN{{Haolun (Michael) Ni\IEEEauthorrefmark{1}, Lev Tauz\IEEEauthorrefmark{1},
Ryan Gabrys \IEEEauthorrefmark{2},
Lara Dolecek\IEEEauthorrefmark{1}} \\
\IEEEauthorblockA{\small{\IEEEauthorrefmark{1} Department of Electrical and Computer Engineering, University of California, Los Angeles, USA}}
\IEEEauthorblockA{ \IEEEauthorrefmark{2}
        \small{ California Institute for Telecommunication and Information Technology, San Diego, USA}}
 \IEEEauthorblockA{
		\small{michaelni12@g.ucla.edu, levtauz@ucla.edu, rgabrys@ucsd.edu}, \small{dolecek@ee.ucla.edu}}
		\vspace{-3em}}

}

\maketitle

\begin{abstract}
Data synchronization is a fundamental problem with applications in diverse fields such as cloud storage, genomics, and distributed systems. This paper addresses the challenge of synchronizing two files, one of which is a subsequence of the other and related through a constant rate of deletions, using an improved communication protocol. Building upon prior work, we integrate advanced multi-deletion correction codes into an existing baseline protocol, which previously relied on single-deletion correction. Our proposed protocol reduces communication cost by leveraging more general partitioning techniques as well as multi-deletion error correction. We derive a generalized upper bound on the expected number of transmitted bits, applicable to a broad class of deletion correction codes. Experimental results demonstrate that our approach outperforms the baseline in communication cost. These findings establish the efficacy of the improved protocol in achieving low-redundancy synchronization in scenarios where deletion errors occur.
\end{abstract}

\section{Introduction}
\label{section:introduction}

% Data synchronization is a critical research topic with applications in diverse domains such as cloud data storage and DNA analysis. At its core, a data synchronization problem involves two versions of a file: the "correct" version and a "corrupted" version. The corrupted version differs from the correct version due to certain edits, which may include deletions, insertions, or substitutions. The goal of data synchronization is to accurately identify and correct these edits through communication between the two file versions. Depending on the type of edits to be addressed, different synchronization protocols have been developed. \textcolor{red}{I wouldn't phrase it this way. There doesn't have to be a ``correct'' version and manytimes in bio applications or even when dealing with files you just want to know the deltas between the two files. Just say there's two files and you want to know exactly how to convert one to the other (which is the same as knowing the deltas)}

Data synchronization is a critical research topic with applications in diverse domains such as cloud data storage and DNA analysis. At its core, a data synchronization problem involves two versions of a file that may differ due to edits such as deletions, insertions, or substitutions. The objective is to precisely determine how to transform one file into the other by identifying the differences between them. Depending on the nature of the edits, different synchronization protocols have been developed towards achieving this goal efficiently and accurately \rev{\cite{or_code, VT, haeuplar_code}}. This work focuses on the scenario where the edits are deletions. Due to the symmetric nature between insertions and deletions, and that substitutions can be viewed as a combination of a deletion and an insertion, developing a deletion correction protocol serves as a fundamental step toward creating protocols capable of handling a broader range of edits.

Since the introduction of the deletion correction problem by Levenshtein in the 1960s, various approaches have been proposed to improve synchronization using advanced mathematical techniques. Levenshtein demonstrated that the Varshamov-Tenengolts (VT) codes \cite{ VTalone} could correct a single deletion or insertion \nrev{\cite{Levenshtein_code}}. A one-way, error-free coding scheme correcting up to \( \delta \) deletions and insertions \nrev{for a length $n$ sequence} with \( O(\delta \log n) \) redundancy was introduced in \cite{or_code}, albeit with high decoding complexity. A low-complexity protocol achieving similar redundancy was presented in \cite{VT}. An interactive, deterministic protocol was developed in \cite{ori} based on the work in \cite{VT}, extending its capability to correct a fixed rate of deletions. This protocol was later enhanced in \cite{ori_ex} to address both deletions and insertions, as well as to support non-binary sequences. \rev{The problem was further extended to approximate synchronization in \cite{itw_paper}, where instead of requiring the reconstructed file to be exactly the same as the original one, a predefined rate of distortion was accepted. While the protocols in \cite{ori, ori_ex, itw_paper} focus on the probabilistic model, \cite{haeuplar_code} introduced a single-round, optimal coding scheme for the worst-case model, where the number of deletions is fixed rather than probabilistic.} Other recent advances in data synchronization include the one-way multilayer code \cite{multilayer}, polynomial-time decodable codes \cite{polycode}, localized deletion correction codes \cite{localcode}, and single-round synchronization schemes \cite{singleround}.

The interactive protocol introduced in \cite{ori} provides a practical approach for handling a \textit{fixed rate} of deletions \nrev{(rather than a \textit{fixed number} of deletions)}, making it capable of correcting a large number of deletions as long as their count remains linear with the sequence length. The protocol employs the VT codes for deletion correction, which requires iteratively dividing the sequence into smaller parts, each containing at most one deletion. However, since the publication of the baseline protocol in \cite{ori}, advances in deletion correction codes have introduced schemes capable of correcting multiple deletions with reasonable redundancy for a specified number of deletions. Examples of such multi-deletion correction codes include the coding schemes presented in \cite{2d,ryan_2d_code,ryan_general_code}. These advancements highlight the potential for further improving the baseline protocol by integrating multi-deletion correction codes.

The primary objective of this work is to explore the integration of multi-deletion correction codes into the baseline interactive protocol to develop an improved protocol and evaluate its performance relative to the baseline protocol across key communication metrics, including the number of bits transmitted and computational complexity. The main contributions of this work are summarized as follows:

\begin{itemize}
    \item \textbf{Incorporation of Multi-Deletion Correction Codes:} \rev{Multi-deletion correction codes are leveraged alongside the Varshamov-Tenengolts (VT) codes to address deletions in the sequence, mitigating the extensive repetition required when dividing the sequence. This enhancement reduces the number of bits transmitted while maintaining the computational complexity of the improved protocol within the same order of magnitude as the baseline protocol, provided that the computational complexity of the employed multi-deletion correction codes does not exceed this level.}

    \item \textbf{Strategic Adjustment of Segment Length:} \rev{\nrev{The protocol repeatedly divides the file to be synchronized into smaller portions.} The length of each portion after the initial division is strategically adjusted to further minimize the number of transmitted bits. This \nrev{adjustment} introduces a new tunable parameter in the protocol, which allows for optimization based on the specific characteristics of the deletion correction codes employed and the deletion rate \( \beta \). By carefully configuring this parameter, the synchronization process achieves even lower communication costs.}

    \item \textbf{Refinement of Estimation Techniques:} \rev{Certain estimations used to derive the upper bound \nrev{on redundancy} in the baseline protocol were identified as suboptimal. Advanced estimation techniques are employed to refine and tighten this upper bound, contributing to the protocol's sharper communication cost estimates.}
\end{itemize}

As a result of these combined contributions, the improved protocol achieves a significant reduction in the theoretical upper bound on the number of bits transmitted. For a file sequence of length \( n \) and a deletion rate \( \beta \), the baseline protocol transmits at most \( 109n\beta \log \frac{1}{\beta} \) bits \nrev{as stated in \cite{ori}}, while the improved protocol reduces this upper bound to \( 28.5n\beta \log \frac{1}{\beta} \).

To illustrate the practical benefits of these enhancements, we also implement the improved protocol using the VT codes for single-deletion correction and the two-deletion correction code introduced in \cite{2d}. Experimental results demonstrate that the improved protocol outperforms the baseline protocol in the number of bits transmitted. This finding highlight the effectiveness of integrating multi-deletion correction codes in reducing the communication cost of data synchronization. 
% \textcolor{red}{Comment(resolved? see the blue line in the second paragraph):\url{https://arxiv.org/abs/1804.03604}, Deterministic Document Exchange Protocols, and Almost Optimal Binary Codes for
% Edit Errors, There are optimal schemes that take one round. Different models. These papers give optimal schemes but it's for a worst case model, not probabilitic }

The remainder of the paper is organized as follows: \rev{Section \ref{section:background} covers relevant background information.} Section \ref{section:algorithm} details the protocol and highlights key differences from the baseline. Section \ref{section:theory} analyzes the protocol theoretically, including its upper bound on transmitted bits and complexity. Section \ref{section:experiment} presents experimental results comparing the improved and baseline protocols.

\section{Background}
\label{section:background}

Consider two parties: Alice, the encoder, who possesses the original file \( X \), a binary sequence of length \( n \), and Bob, the decoder, who holds a version \( Y \) derived from \( X \) with certain bits randomly deleted. In this work, each bit in \( X \) is modeled as an independent and identically distributed random variable, with each bit having a probability \( \beta \) $(\beta>0)$ of being deleted. Communication between Alice and Bob occurs over a noiseless, two-way channel, enabling multiple rounds of interaction until synchronization is achieved. The primary objective of the protocol is to minimize the number of bits transmitted. The complexity of the protocol is another key criterion, as it determines the computational resources needed for encoding, decoding, and overall execution—factors critical to practical implementations. Accordingly, this work evaluates the performance of the improved protocol using the number of bits transmitted and the computational complexity.

Before delving into the details of the improved synchronization protocol, it is worthwhile to briefly revisit the baseline protocol introduced in \cite{ori}, as the improved protocol builds upon it. The baseline protocol addresses deletion correction through three modules. In the first module, the sequence is divided into smaller portions, enabling a more manageable expected number of deletions within each portion compared to the entire sequence. In the second module, deletions within each portion are corrected individually. Since the baseline protocol utilizes VT codes, which can only correct one deletion at a time, an additional step is required to further divide portions where the actual number of deletions exceeds this capability. This iterative process of dividing portions into smaller subsequences until VT codes can be applied is referred to as the divide-and-conquer approach, as introduced in \cite{VT}. Following the correction of deletions using the divide-and-conquer approach, a third module is implemented to handle residual errors from the first two modules, \nrev{which can at this point be viewed as substitution errors, thus} ensuring the protocol achieves error-free synchronization.

\section{Improved Synchronization Algorithm}
\label{section:algorithm}

Similar to the baseline protocol, the improved protocol consists of three core modules: the Matching Module, the Deletion Recovery Module, and the Error Correction Module. Together, these modules synchronize Bob’s sequence to match Alice’s, ensuring both efficiency and accuracy. The improved protocol introduces three key advancements: \rev{(1) multi-deletion correction codes are integrated with VT codes to reduce sequence repetition, lowering the number of transmitted bits without increasing the overall computational complexity; (2) segment lengths are strategically adjusted to optimize communication costs based on deletion rates and the properties of the employed correction codes; and (3) refined estimation techniques are used to tighten the upper bound on communication cost, further enhancing protocol efficiency.} The following subsections provide a detailed explanation of each module.

\subsection{The Matching Module}

The Matching Module partitions the sequence into smaller portions to reduce the number of deletions in each portion. To achieve this goal, Alice transmits specific parts of her sequence, referred to as \emph{pivots}, to Bob. The bits between two consecutive pivots are called \emph{segments}. The lengths of both pivots and segments are predetermined, with the pivot lengths being significantly shorter than the segments. In this work, the segment length \( L_S \) is defined as \( L_S = \frac{s}{\beta} \), where \( \beta \) is the deletion rate, and \( s \) is a tunable parameter called segment length multiplier. In the baseline protocol, \( s \) was empirically set to \( 1 \). However, we observe that modifying the segment length could further reduce the number of bits transmitted. Consequently, \( s \) is treated as an adjustable parameter in this work to enable additional optimization. The pivot length \( L_P \) is chosen as the smallest integer satisfying 
\( L_P \geq 3s + 8 + 2 \log \frac{1}{\beta} \). This selection is designed to minimize the probability of errors, and a detailed explanation of this restriction is provided \supply{in the Appendix.} Once the sequence is divided into pivots and segments, Alice transmits the entire set of pivots to Bob through the communication channel.

Upon receiving the pivots, Bob attempts to locate them in his sequence. Note that Bob may not be able to find all the pivots; some may be missing due to deletions. Additionally, errors may arise if Bob misidentifies a pivot because of the presence of a \emph{false pivot}, which refers to a coincidental match elsewhere in the sequence. To address this issue, Bob performs a graph-based optimization to maximize the selection of correct pivots while minimizing false matches. This graphing process, detailed in \cite{ori}, remains unchanged in the improved protocol. By ensuring that the pivot length satisfies the requirement mentioned above, matching errors are effectively constrained.

After completing the optimization process, Bob informs Alice of the selected pivots, enabling both parties to partition their sequences into corresponding \emph{sections} for subsequent processing. \nrev{This step} effectively divides the original sequence into smaller sections, significantly reducing the expected number of deletions per section compared to the entire sequence. As a result, the computational complexity of the Deletion Recovery Module is substantially lowered. Additionally, since all sections can be processed in parallel, the protocol's latency can be significantly reduced.

\subsection{The Deletion Recovery Module}

Following the Matching Module, the sequences are partitioned into smaller sections with fewer deletions. The Deletion Recovery Module focuses on correcting deletions within these sections. For the improved protocol, assume a set of deletion correction codes capable of handling up to \( w \)-deletions is utilized, where \( w \) is a positive integer. \nrev{This assumption indicates} that the protocol can directly correct up to \( w \)-deletions in a section without extra procedures. However, it is possible for sections to have a number of deletions exceeding this limit. To address these scenarios, the divide-and-conquer approach is employed.

As previously mentioned, this approach recursively splits sections into smaller parts until the deletions in each part fall within the correction capability of the codes. The decoder computes the number of deletions in each section and communicates with the encoder accordingly. If deletions exceed $w$, the encoder transmits \emph{delimiters} — short strings of bits around the section center — to aid in further splitting the sequence. The delimiter length is set to $c \log L_s$, where $c$ is the delimiter length coefficient typically set to 3 unless otherwise stated, \nrev{and $L_s$ is the length of the section}. If a delimiter is not found due to deletions, the decoder requests another delimiter until one is identified. Once a delimiter is found, the current part of the sequence is divided into two smaller halves, each with the number of deletions no more than the original part. If the number of deletions in a half is no more than $w$, the corresponding deletion correction code is applied. In this scenario, suppose there are \( i \) deletions in a half (\( i \leq w \)). The protocol employs the deletion correction code designed to handle these \( i \) deletions. Otherwise, the half is further divided. This process continues until all parts of the section are synchronized. Detailed explanation of the divide-and-conquer approach could be found in \cite{VT}.

The baseline protocol follows the same procedure for correcting deletions in the Deletion Recovery Module, with the distinction that it employs the VT code, which can only correct a single deletion at a time. Thus, in the baseline protocol, the factor $w$ is set to 1, limiting its deletion correction capability. In contrast, the introduction of multi-deletion correction codes in the improved protocol reduces the frequency of partitioning and minimizes the number of delimiters that need to be transmitted. This results in fewer rounds and less bits transmitted as delimiters. Detailed discussion of how the incorporation of the multi-deletion correction codes impacts the communication cost can be found in Section \ref{section:theory} and \ref{section:experiment}.

Errors in this module can occur if a delimiter is incorrectly matched, leading to improper partitioning. However, the probability of such an error is upper-bounded. When the delimiter length coefficient $c$ is set to 3, an 
appropriate upper bound for the error probability in the deletion recovery module is provided in the Appendix.

\subsection{The Error Correction Module}

After the first two modules, the deletions are corrected, yielding a preliminarily synchronized sequence. \textcolor{black}{However, the sequence obtained at this stage is not guaranteed to be an exact copy of the encoder’s sequence.} As previously stated, errors may occur in the previous modules due to mismatched pivots or delimiters, and these errors manifest themselves as substitutions. Residual errors from these modules necessitate a final error correction step. \supply{In the Appendix}, it is proven that after the first two modules, the probability of error in the resulting sequence is upper-bounded by $2\beta + o(\beta)$. Given this, the error can be corrected by Alice sending  approximately \( nH(2\beta + o(\beta)) \) (i.e. the capacity of the substitution channel) bits to Bob as ordinary parity check bits, where \( H(\cdot) \) denotes the binary entropy function. Once the residual errors are corrected in the Error Correction Module, the sequence Bob receives will match Alice's sequence exactly. Thus, the entire deletion correction process is completed.

\section{Analysis and Discussion} \label{section:theory}

This section presents the theoretical analysis of the improved protocol, with a primary focus on establishing an upper bound for the total number of bits transmitted. This upper bound is compared with that of the baseline protocol to highlight the improvements introduced by the enhanced design. Additionally, the computational complexity of the protocol is analyzed to demonstrate its feasibility for practical implementation and application.

% \subsection{Upper Bound of Bits Transmitted}
The following theorem provides an upper bound on the number of bits transmitted in the improved protocol:

\begin{theorem} \label{theorem:improved}
    Let \( s > 0 \) be the segment length multiplier. Let \nrev{\( c>0 \)} be the delimiter length coefficient in the Deletion Recovery Module. Let \( w \) be the maximum number of deletions that we can correct with a set of multi-deletion correction codes. Let \( a_i \geq 1, i \in [1, \dots, w] \), be the efficiency of the \( i \)-deletion correction code, where the \( i \)-deletion correction code requires \( i a_i \log q \) bits to correct \( i \) deletions in a sequence of length \( q \). Define \( a = \max\{a_1, a_2, \ldots, a_w\} \). For our improved synchronization protocol, the average number of bits transmitted is upper-bounded by  
    \[
    2\frac{s+1}{s} \left(\frac{2^w}{2^w-1} c + a + 2 \right) n \beta \log \frac{1}{\beta}.
    \]  
\end{theorem}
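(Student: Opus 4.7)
The plan is to decompose the expected total number of transmitted bits into contributions from the three modules — Matching, Deletion Recovery, and Error Correction — bound each separately, then combine. Throughout, I would use that the sequence is partitioned into at most $\lceil n/(L_S+L_P)\rceil \leq n\beta/s$ segment--pivot pairs and that each section has random deletion count $D\sim\mathrm{Bin}(L_S,\beta)$ with $\mathbb{E}[D]=s$.

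I would first handle the Matching and Error Correction contributions, which are conceptually the most direct. In Matching, Alice sends each pivot of length $L_P\leq\lceil 3s+8+2\log(1/\beta)\rceil$ and Bob replies with information about the located pivots; bounding the number of pivot--segment pairs by $n\beta/s$, extracting the leading $\log(1/\beta)$ term from $L_P$, and accounting for both directions of the handshake give the Matching contribution. In Error Correction, the Appendix-proved bound of $2\beta+o(\beta)$ on the post-recovery bit-error probability combined with the substitution-channel capacity and $H(p)\leq p\log(1/p)+O(p)$ yields a parity-bit cost $\leq 2n\beta\log(1/\beta)(1+o(1))$. After careful bookkeeping of the $(s+1)/s$ factor — which arises from the ratio $(L_S+L_P)/L_S\leq(s+1)/s$ (valid for sufficiently small $\beta$, once the $3s+8+2\log(1/\beta)$ overhead in $L_P$ is absorbed) when converting between per-section and per-bit normalizations — these two modules supply the constant ``$+2$'' inside the parenthesis of the final bound.

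For Deletion Recovery, I would condition on the sections output by Matching. The divide-and-conquer on each section produces a random binary tree whose leaves are sub-pieces with at most $w$ deletions (corrected by the multi-deletion codes) and whose internal nodes correspond to delimiter transmissions. For the code contribution, a leaf with $i$ deletions in a sub-piece of length $q\leq L_S$ uses at most $ia_i\log q\leq ia\log L_S$ bits; summing over leaves of a section (using $\sum_\ell i_\ell=D$) and then over the $\leq n\beta/s$ sections yields a code-cost bound $\leq\frac{s+1}{s}\cdot a\cdot n\beta\log(1/\beta)(1+o(1))$, which supplies the ``$+a$'' term.

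The main obstacle is bounding the expected delimiter count. Let $M(D)$ denote the expected number of internal nodes of the random tree for a section with $D$ deletions; it satisfies $M(D)=0$ for $D\leq w$ and
\[
M(D)=1+\mathbb{E}[M(D_1)+M(D-D_1)],\qquad D_1\sim\mathrm{Bin}(D,1/2),\ D>w.
\]
Evaluating at the threshold $D=w+1$, only the boundary events $\{D_1=0\}$ and $\{D_1=w+1\}$ (each with probability $2^{-(w+1)}$) push the problem back into the recursive regime, yielding $M(w+1)(1-2^{-w})=1$, i.e., $M(w+1)=\frac{2^w}{2^w-1}$ — the origin of the claimed constant. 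I would then promote this to a linear bound $M(D)\leq\frac{2^w}{2^w-1}D$, either by an affine-ansatz induction carefully tuned to the boundary, or by an equivalent level-by-level accounting in which the expected number of level-$k$ pieces with more than $w$ deletions is bounded by Markov-type estimates and summed via the geometric series $\sum_{k\geq 0}2^{-kw}=\frac{2^w}{2^w-1}$. Multiplying by $c\log L_S$ bits per delimiter (with retransmission overhead tending to $1$ as $\beta\to 0$) and by the number of sections gives the delimiter contribution $\leq\frac{s+1}{s}\cdot\frac{2^w c}{2^w-1}\cdot n\beta\log(1/\beta)(1+o(1))$. Summing the three module bounds and factoring out the common $2(s+1)/s\cdot n\beta\log(1/\beta)$ yields the stated upper bound; the principal hurdle is the random-tree analysis for delimiters, where the threshold recursion at $D=w+1$ must be carefully promoted to a global linear bound compatible with the discrete binomial recursion.
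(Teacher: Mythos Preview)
Your proposal follows essentially the same approach as the paper: the three-module decomposition, the binomial-splitting recursion for the delimiter count, the threshold computation $M(w+1)=\tfrac{2^w}{2^w-1}$, and its promotion to a linear bound by induction are exactly the paper's Lemma for $\mathbb{E}[N_c(t)]$ (the paper in fact proves the slightly sharper $M(t)\le\tfrac{2^w}{2^w-1}(t-1)$). The code-cost argument via $\sum_\ell i_\ell=D$ matches the paper's inductive Lemma for $\mathbb{E}[N_d(t)]$.

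Two places where your bookkeeping diverges and would need fixing. First, you assume each sub-piece has length $q\le L_S$, but a section can span $r$ segments (with $r$ geometric) whenever intermediate pivots are missed; the paper handles this via a separate lemma bounding $\mathbb{E}[\delta_i\log|F_i|]\le 2(s+1)\log(1/\beta)+O(1)$, where the factor $2$ comes from $\mathbb{E}[r]\le 2$. That lemma, combined with the crude bound $k'\le k=n\beta/s$ on the number of sections, is the actual source of the $2\tfrac{s+1}{s}$ prefactor in Module~II --- not the ratio $(L_S+L_P)/L_S$ you cite, and your stated Module~II contribution $\tfrac{s+1}{s}(\cdot)$ is off by this factor of~$2$. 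Second, Modules~I and~III contribute $\tfrac{2}{s}n\beta\log\tfrac{1}{\beta}$ and $2n\beta\log\tfrac{1}{\beta}$ respectively, which sum to $2\tfrac{s+1}{s}\cdot 1\cdot n\beta\log\tfrac{1}{\beta}$; they supply ``$+1$'' (not ``$+2$'') inside the parenthesis, and the final ``$+2$'' appears only after an additional ``$+1$'' is spent to absorb all accumulated $o(n\beta\log\tfrac{1}{\beta})$ terms.
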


% \todo{Comment(resolved, see below):Need to describe and justify the use of $a_i$ in the protocol. Specifically that it is for accounting for the suboptimality of practical codes in the multi-deletion code literature. Reference the paper for the lower bound.}

\rev{The efficiency \( a_i \) of the \( i \)-deletion correction code is introduced in Theorem \ref{theorem:improved} to account for the suboptimality of practical deletion correction codes in terms of redundancy. For an \( i \)-deletion correction code, the lower bound on redundancy indicates that at least \(i\log q\) bits are required to be transmitted to correct \(i\) deletions for a sequence of length \( q \). While this \nrev{expression} represents a theoretical lower bound, existing multi-deletion correction codes for \( i > 1 \) have not yet achieved this level of efficiency. The parameter \( a_i \geq 1 \) is therefore introduced as an efficiency coefficient to quantify how closely a given \( i \)-deletion correction code approaches this bound. Specifically, \( a_i = 1 \) indicates optimal redundancy, and higher values reflect greater deviation from the theoretical lower bound. Importantly, \( a_i \) can be adjusted as new, more efficient multi-deletion correction codes are developed.}

The upper bound for Theorem \ref{theorem:improved} is derived by upper bounding the number of bits transmitted in each module of the protocol. Through stochastic analysis, we can sufficiently bound both the number of bits transmitted and the probability of incorrect decoding in the first two modules. \nrev{Hence the number of bits needed in the error correction module to fix any potential decoding mistakes can also be bounded.} Detailed proofs can be found \supply{in the Appendix}. The remainder of this section will focus on discussing the improvements demonstrated by Theorem \ref{theorem:improved}.

For comparison, we present the upper bound of the baseline protocol as proved in \cite{ori}.

\begin{theorem}[from Lemma 1, \cite{ori}]
\label{theorem:baseline}
    Let $c$ be the delimiter length coefficient in the Deletion Recovery Module. For the baseline  synchronization protocol where $s=1, w=1$ and $a=1$, the average number of bits transmitted is upper-bounded by  
\[
\left(8(4c+1)+5\right) n \beta \log \frac{1}{\beta}.
\]  
\end{theorem}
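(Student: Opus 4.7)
The plan is to bound the expected number of bits transmitted by each of the three modules separately, then sum the results and factor out $n\beta\log\frac{1}{\beta}$. I expect the three constants inside the parenthesis to arise as follows: $a$ from the multi-deletion correction codes applied at the leaves of the divide-and-conquer tree, $\frac{2^w}{2^w-1}c$ from the total delimiter traffic summed over that tree, and the additive $2$ from the Matching Module pivots together with the Error Correction Module parity payload. The global prefactor $2\frac{s+1}{s}$ should emerge from counting the number of sections in the partition, with the $1/s$ correction absorbing the ratio $L_P/L_S$ of pivot to segment length.

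For the Matching Module, Alice transmits one pivot per boundary between segments, giving at most roughly $n/L_S = n\beta/s$ pivots, each of length $L_P\le 3s+8+2\log\frac{1}{\beta}$. Keeping only the dominant $2\log\frac{1}{\beta}$ term yields a contribution of order $(2/s)\,n\beta\log\frac{1}{\beta}$; the constant-length portion of $L_P$ absorbs into lower-order terms and into the $(s+1)/s$ factor.

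The Deletion Recovery Module is the analytical core of the proof. I would define $T(q,d)$ as the expected number of bits transmitted to resolve a subsection of length $q$ containing $d$ deletions, set up a recurrence, and solve it. When $d\le w$, the appropriate $i$-deletion code is invoked and contributes at most $i a_i\log q \le a\,i\log q$ bits. When $d>w$, Alice transmits a delimiter of length $c\log q$; since each delimiter bit is independently deleted with probability $\beta$, the expected number of delimiter retransmissions is a manageable constant factor, after which the subsection splits into two halves whose expected deletion counts sum to $d$. Summing the delimiter cost level by level over the divide-and-conquer tree produces a geometric series whose closed form yields the factor $\frac{2^w}{2^w-1}$. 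Combined with the leaf contributions from the multi-deletion codes and averaged against the expected $s$ deletions per section, this module contributes order $\bigl(\frac{2^w}{2^w-1}c+a\bigr)n\beta\log\frac{1}{\beta}$. The main obstacle is organizing the recurrence carefully enough that the shrinking delimiter length at deeper levels and the random delimiter retransmission overhead interact to produce exactly the advertised geometric sum.

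Finally, the Error Correction Module cost follows from the appendix's per-bit residual error bound of $2\beta+o(\beta)$: capacity-achieving parity bits cost at most $nH(2\beta+o(\beta))$, which expands to $2n\beta\log\frac{1}{\beta}(1+o(1))$ as $\beta\to 0$. Summing the three module contributions and pulling out the common factor $n\beta\log\frac{1}{\beta}$ produces the stated upper bound; apart from the Deletion Recovery recurrence, every step reduces to routine bookkeeping.
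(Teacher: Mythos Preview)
Your proposal is aimed at the wrong target. The statement you were asked to address is Theorem~\ref{theorem:baseline}, which the paper does not prove at all: it is explicitly quoted as ``from Lemma~1, \cite{ori}'' and is introduced only ``for comparison'' with the baseline protocol. There is no proof in this paper to compare against.

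What you have actually sketched is a proof of Theorem~\ref{theorem:improved}: the constants you are trying to produce --- the prefactor $2\frac{s+1}{s}$, the term $\frac{2^w}{2^w-1}c$, the additive $a$ and $2$ --- are precisely those of the improved bound, not of the baseline bound $(8(4c+1)+5)n\beta\log\frac{1}{\beta}$. Even specializing your argument to $s=w=a=1$ would give $4(2c+3)=8c+12$, not $32c+13$; the whole point of the paper is that its estimations are sharper than those in \cite{ori}, so the proof strategy you outline cannot reproduce the looser constant in Theorem~\ref{theorem:baseline}. If the intent was to recover the baseline constant, you would have to revisit the cruder estimates used in \cite{ori} (e.g., their handling of $\mathbb{E}[\delta_i\log|F_i|]$ and the delimiter recursion), which neither your proposal nor the present paper reproduces.

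If instead your goal was Theorem~\ref{theorem:improved}, then your outline is broadly aligned with the paper's proof in the Appendix, which splits the cost into $\mathbb{E}[N_I]+\mathbb{E}[N_{II}]+\mathbb{E}[N_{III}]$ and handles the Deletion Recovery Module by separate inductions on $\mathbb{E}[N_c(t)]$ and $\mathbb{E}[N_d(t)]$ rather than a level-by-level geometric sum over a tree. One point of divergence: the paper's factor $2\frac{s+1}{s}$ does not come from the Matching Module or from $L_P/L_S$ as you suggest, but from bounding $\mathbb{E}[\delta_i\log|F_i|]$ via the geometric distribution of the number of segments per section (Lemma~\ref{lemma:edf}); your attribution of that factor would need to be corrected.
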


% \todo{Comment(resolved): Add comparison with theorem 2 (s =1, w =1 ,a = 1) and show that when the protocol is the baseline protocol, we have a tighter bound. Thus, we will compare the improvements of generalization on our bound.}

\rev{First, we want to highlight the refined upper bound of Theorem \ref{theorem:improved} in comparison with Theorem \ref{theorem:baseline}. The baseline protocol, which employs only the VT codes as the deletion correction code, corresponds to the case where \( w = 1 \), \( s = 1 \), \( c = 3 \), and \( a = 1 \). Substituting these values into the bound in Theorem \ref{theorem:improved}, the upper bound for the baseline protocol simplifies to \( 36n \beta \log \frac{1}{\beta} \). In contrast, the upper bound from Theorem \ref{theorem:baseline} for \( c = 3 \) is \( 109n \beta \log \frac{1}{\beta} \), as shown previously. This \nrev{result} demonstrates that even without introducing multi-deletion correction codes or modifying the segment length, the new upper bound tightens the result on the baseline protocol.}

\rev{Next, we shall discuss the benefits of utilizing multi-deletion correction codes and adjusting segment length in the Deletion Recovery Module. Consider the following practical scenario where \nrev{the multi-deletion correction codes can only correct up to two deletions (i.e., $w=2)$}. To the best of the authors' knowledge, the two-deletion correction code in \cite{2d} is among the most redundancy-efficient two-deletion correcting codes with a redundancy of \( 7 \log q \) for a sequence of length \( q \). For the single deletion correcting code, we can utilize the redundancy optimal VT codes. With the parameters set as \( s=2 \), \( w=2 \), \( c=3 \), and \( a=3.5 \), the upper bound on the number of bits is  \( 28.5n\beta \log \frac{1}{\beta} \) which is a significant improvement over the baseline protocol upper bound of \( 36n \beta \log \frac{1}{\beta} \), even with a suboptimal two-deletion correcting code.  } 
% Another noteworthy numerical result of the upper bound in Theorem \ref{theorem:improved} involves the case where the VT code and the two-deletion correction code from \cite{2d} are employed as deletion correction codes. This specific configuration is also implemented in the next section. To the best of the author's knowledge, the two-deletion correction code in \cite{2d} is among the most advanced multi-deletion correction codes, achieving a low redundancy of \( 7 \log q \) to correct two deletions in a sequence of length \( q \). Setting the segment length multiplier \( s \) to 2, and substituting \( s=2 \), \( w=2 \), \( c=3 \), and \( a=3.5 \) into the derived formula, the upper bound for the implemented improved protocol becomes \( 28.5n\beta \log \frac{1}{\beta} \), consistent with the result stated in Section \ref{section:introduction}. It is worth noting that the redundancy of the two-deletion correction code is not yet optimal, indicating that the upper bound could be further improved with the development of more advanced deletion correction codes.

\nrev{To evaluate the efficiency of the improved protocol, we define the redundancy coefficient \( r(s, w, a, c) \) as:}
\begin{equation}
    r(s, w, a, c) = 2\frac{s+1}{s} \left(\frac{2^w}{2^w-1} c + a + 2 \right),
\end{equation}
\nrev{This coefficient captures the leading term in the upper bound of the communication cost, specifically the factor in front of \( n\beta\log \frac{1}{\beta} \), as established in Theorem \ref{theorem:improved}.}

\nrev{The redundancy coefficient \( r(s, w, a, c) \) is independent of the sequence length \( n \) and deletion rate \( \beta \), making it a versatile metric for comparing protocol performance under varying configurations. Figure \ref{r_s} illustrates the relationship between \( r(s, w, a, c) \), the segment length multiplier \( s \), and the maximum deletions \( w \) the codes can correct. Notably, increasing \( s \) reduces the theoretical upper bound on redundancy. However, as experimental results will show, actual redundancy does not always decrease monotonically; after a certain point, larger segment lengths can increase the redundancy of the protocol. Additionally, Fig. \ref{r_s} highlights the impact of using more robust deletion correction codes. For a fixed \( a \), increasing \( w \) leads to a lower redundancy coefficient, aligning with expectations since stronger codes reduce the overall communication overhead. These insights underline the importance of selecting \( s \) and \( w \) to enhance theoretical efficiency and practical performance.}

\begin{figure}[t]
\centerline{\includegraphics[width=0.8\linewidth]{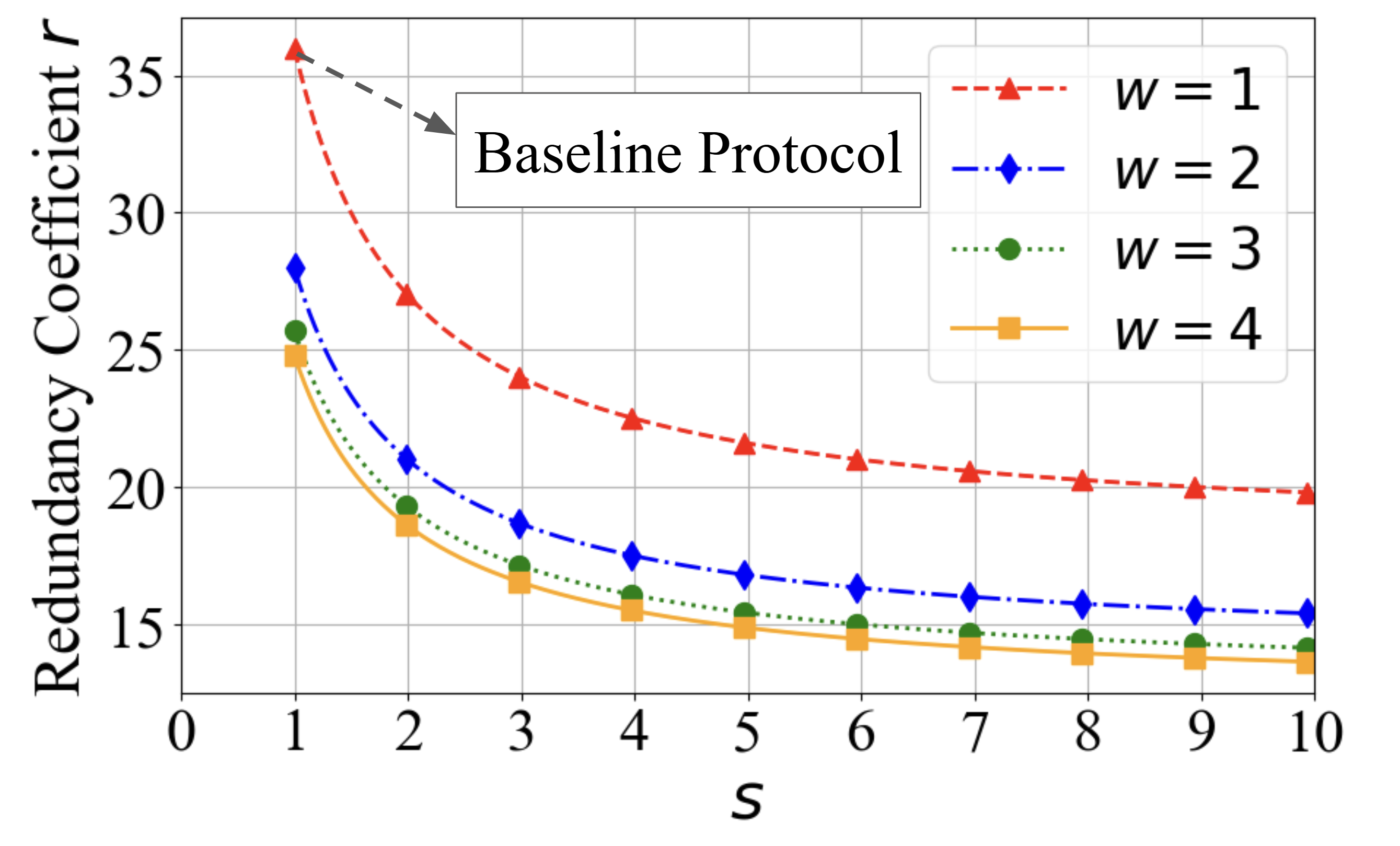}} % Adjusts the width to fit within the text width
\caption{Graph of the redundancy coefficient \( r \) as a function of \( s \) for various values of \( w \) with fixed parameters \( a = 1 \) and \( c = 3 \).}
\vspace{-0.4cm}
\label{r_s}
\end{figure}

The computational complexity of the improved protocol primarily depends on the deletion correction codes employed. Excluding the deletion correction process, the most demanding step is the pivot selection in the Matching Module, with a theoretical complexity of \( O(n^4\beta^6) \), as estimated in \cite{ori}. This matches the baseline protocol's complexity. For the implemented protocol, which uses VT codes and a two-deletion correction code with \( O(n) \) complexity, the overall complexity remains dominated by the Matching Module, resulting in \( O(n^4\beta^6) \). While increasing the segment length may reduce practical computational costs, the theoretical complexity remains unchanged.

\section{Experimental Results}
\label{section:experiment}

This section presents the experimental results for the improved protocol. As mentioned earlier, the implemented protocol uses VT codes and two-deletion correction codes to address single and double deletions in a string, respectively. The performance of the improved protocol is compared with the baseline protocol \rev{which employs VT codes for correcting single deletions.} \rev{Note that the simulation results utilize the channel capacity for the number of bits transmitted in the Error Correction Module. This focus is justified by the observation that the majority of the communication cost arises from the first two modules. Consequently, our simulations and analysis primarily emphasize these two modules, as they represent the key contributors to the overall communication cost.} Since the segment length plays a crucial role in determining the number of bits transmitted in the protocol, the results for different segment lengths will be presented and compared for both protocols first, with an analysis of how varying segment lengths impact performance.

\begin{figure}[t]
    \centering
    % First subfigure
    \begin{subfigure}[t]{0.48\linewidth} % Adjust width as needed
        \includegraphics[width=\linewidth]{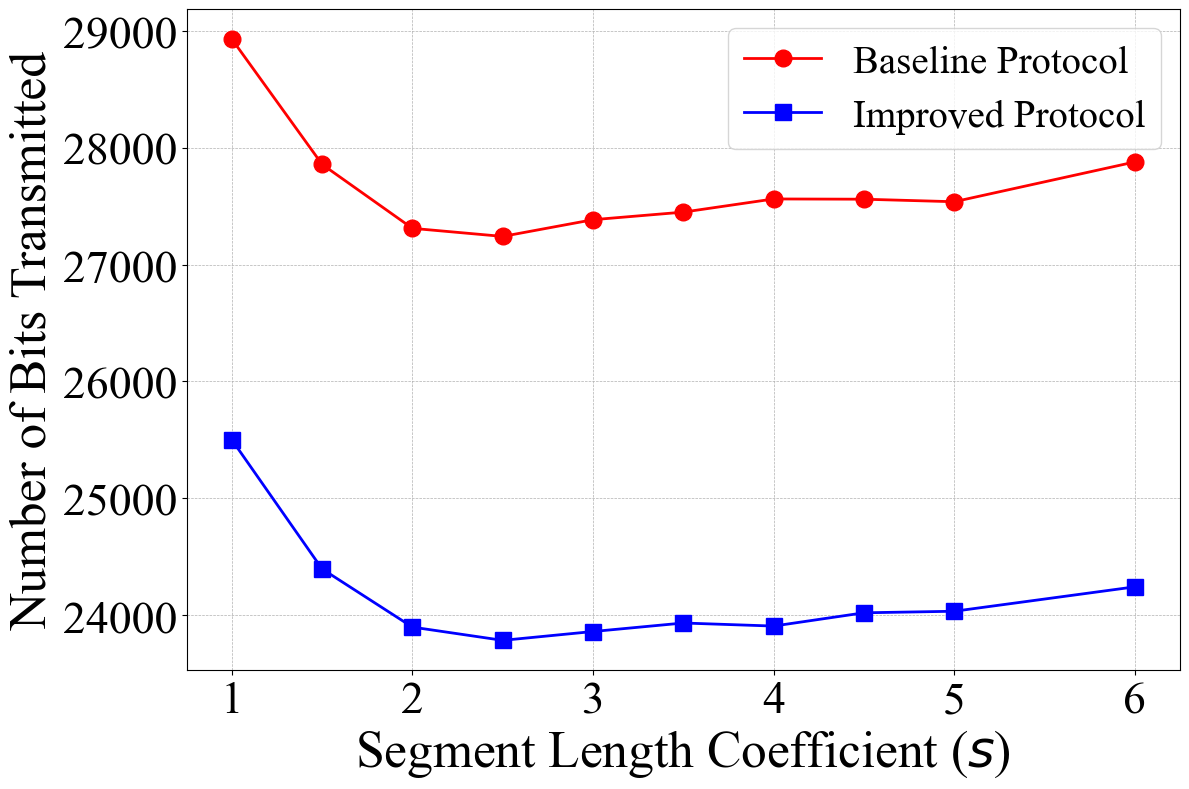}
        \caption{\small Deletion rate $\beta = 0.01$.}
        \label{fig5}
    \end{subfigure}
    \hfill
    % Second subfigure
    \begin{subfigure}[t]{0.48\linewidth} % Adjust width as needed
        \includegraphics[width=\linewidth]{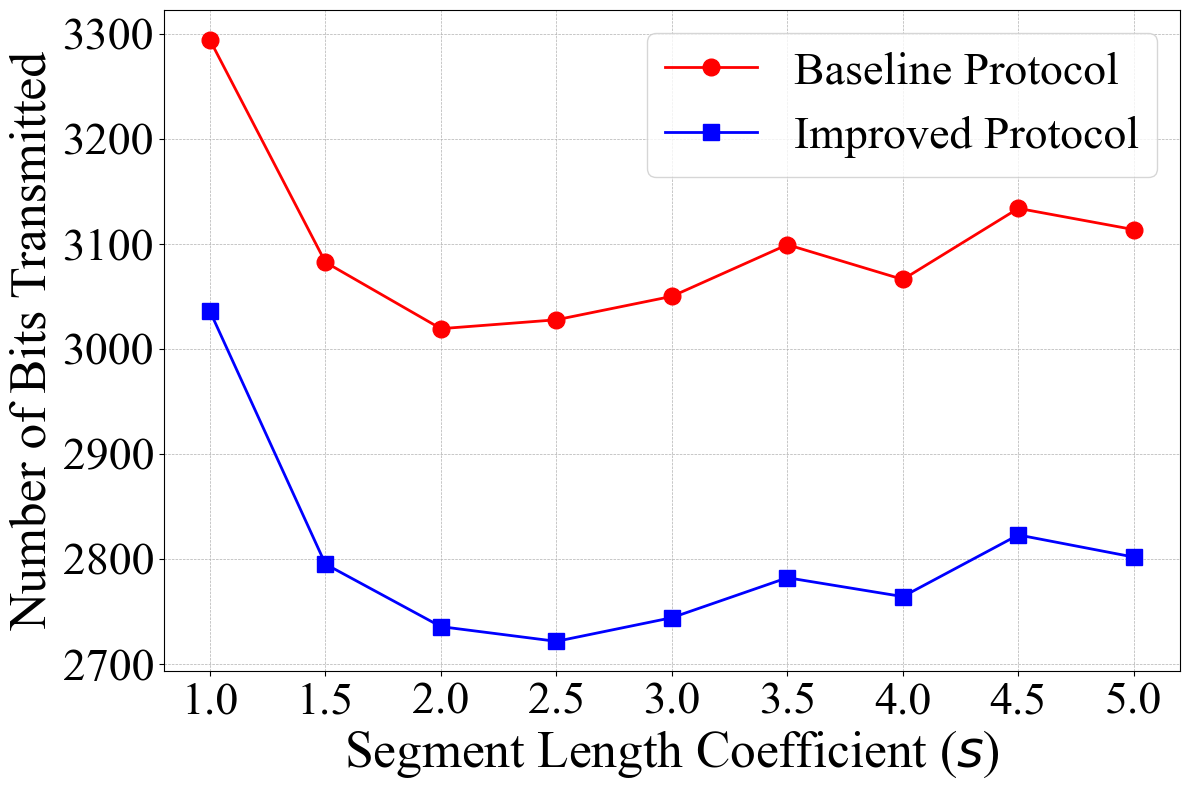}
        \caption{\small Deletion rate $\beta = 0.001$.}
        \label{fig6}
    \end{subfigure}
    \caption{\small Number of bits transmitted for both the baseline and improved protocols across varying segment length coefficients, where the segment length is scaled by the coefficient. Each subfigure represents a different deletion rate.}
    \vspace{-0.4cm}
    \label{fig:combined}
\end{figure}

% \todo{Comment(resolved?, see above):Explicitly state which are the improved and baseline protocol parameters}

% Furthermore, the major technical contributions of this paper lie in optimizing the first two modules.

\subsection{Effect of Segment Length}

% \begin{figure}[t]
% \centerline{\includegraphics[width=\linewidth]{isit_fig3.2.png}} % Adjusts the width to fit within the text width
% \caption{Number of bits transmitted for both the original and improved protocols with a deletion rate of $\beta = 0.01$ across varying segment length coefficients, where the segment length is scaled by the coefficient.}
% \label{fig5}
% \end{figure}

\begin{figure}[t]
\centerline{\includegraphics[width=0.8\linewidth]{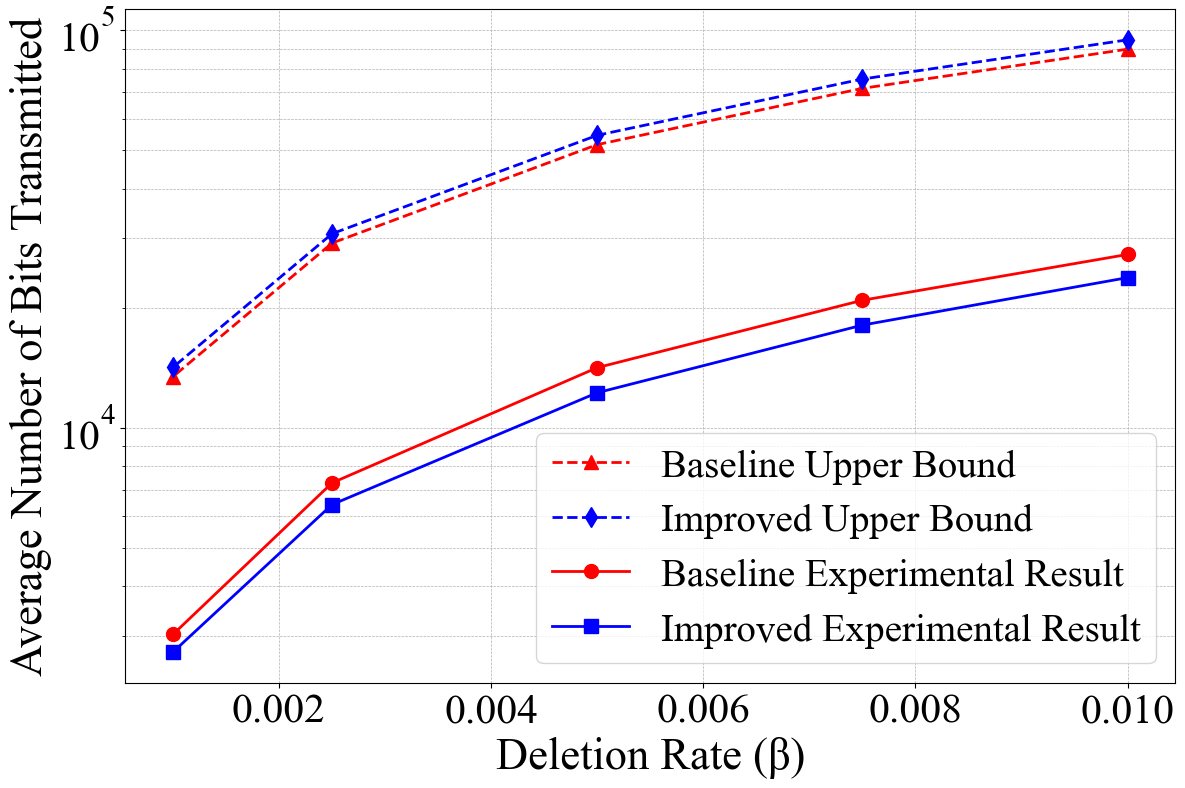}} % Adjusts the width to fit within the text width
\caption{Number of bits transmitted for the baseline and improved protocols at different deletion rates, with sequence length $n=50,000$ and segment length $L_S = \frac{2}{\beta}$.}
\vspace{-0.4cm}
\label{fig3}
\end{figure}

Figures \ref{fig5} and \ref{fig6} show how the number of bits transmitted varies with segment length for both protocols at deletion rates of \( \beta = 0.01 \) and \( \beta = 0.001 \), respectively. \nrev{The results reveal a non-monotonic trend: the number of bits first decreases and then increases as segment length grows. While the theoretical upper bound in Fig. \ref{r_s} predicts monotonic redundancy reduction with increasing segment length, experimental results suggest that the bound in Theorem \ref{theorem:improved} may not fully capture the effects of large segment lengths. Future work can focus on addressing this gap in order to provide more accurate analysis of the protocol.}

\nrev{The baseline protocol in \cite{ori} uses a segment length of \( L_S = \frac{1}{\beta} \), based on an expected single deletion per segment. However, many sections contain no deletions, indicating this choice is suboptimal. Increasing segment length reduces the number of pivots required for transmission while keeping deletions manageable within each section. For the parameters tested, the optimal segment length appears to lie between \( \frac{2}{\beta} \) and \( \frac{3}{\beta} \), highlighting the need for further exploration to determine optimal values for both protocols.}

\subsection{Comparison with the Baseline Protocol}

Figure \ref{fig3} compares the number of bits transmitted by the baseline and improved protocols for a sequence length \( n = 50{,}000 \) and a deletion rate \( \beta \) ranging from \( 0.001 \) to \( 0.01 \), with segment length \( L_S = \frac{2}{\beta} \) for both. \nrev{The results show that the improved protocol consistently reduces transmitted bits across all deletion rates, achieving up to a \( 13\% \) reduction compared to the baseline. While the baseline’s theoretical upper bound is slightly lower due to suboptimality in the two-deletion correction code, the improved protocol outperforms in practice, with potential for further reductions through enhanced multi-deletion correction codes.}

% \begin{figure}[t]
% \centerline{\includegraphics[width=\linewidth]{isit_fig2.2.png}} % Adjusts the width to fit within the text width
% \caption{Number of rounds for the baseline and improved protocols at different deletion rates, with sequence length $n=50,000$ and segment length $L_S = \frac{2}{\beta}$.}
% \label{fig4}
% \end{figure}

% Moreover, Figure \ref{fig3} shows that both protocols perform significantly better as the deletion rate decreases. For instance, the improved protocol requires approximately $3,000$ bits to correct deletions when $\beta = 0.001$, whereas significantly more bits are needed when $\beta = 0.01$. The increased inefficiency at higher deletion rates is likely due to a higher probability of delimiters with deletions during the divide-and-conquer process. In such cases, the encoder must repeatedly send delimiters until one is found in the deleted sequence, resulting in higher communication costs.

\section*{Conclusion}
This paper presented an improved deterministic communication protocol for synchronizing data files with random deletions, utilizing advanced multi-deletion correction codes and recursive partitioning techniques. By generalizing the expected communication cost for a broad class of deletion correction codes, we established the protocol's theoretical robustness and practical effectiveness, demonstrating reduced redundancy compared to the baseline protocol in both theoretical and experimental evaluations. Future work could explore the impact of the improved protocol on communication rounds and extend its applicability to hybrid error models involving insertions and substitutions, paving the way for enhanced performance in diverse data synchronization scenarios.
% \todo{comment, resolved: Future work will focus on other considerations such as number of rounds}

\section*{Acknowledgment}
\nrev{The authors acknowledge the support of NSF through grant CCF-CIF 2312872 for this work.}

\cleardoublepage
\bibliographystyle{ieeetr}
\bibliography{reference}

\cleardoublepage
\appendix

This appendix is organized into five subsections.

Appendix \ref{app:m0} establishes key theorems that provide upper bounds on the expected number of bits transmitted in each module, forming the foundation for the subsequent analysis. Appendices \ref{app:m1}, \ref{app:m2}, and \ref{app:m3} analyze the number of bits transmitted in the Matching Module, Deletion Recovery Module, and Error Correction Module, respectively. Additionally, Appendix \ref{app:m3} examines the relationship between pivot length and the probability of error in the first two modules. Finally, Appendix \ref{app:pivot length proof} presents the proof of Theorem \ref{theorem:pivot length}, a key result used in Appendix \ref{app:m3} to bound the number of bits transmitted in the Error Correction Module.  

Together, these results establish an upper bound on the total number of bits transmitted in the protocol, thereby proving Theorem \ref{theorem:improved}.

\subsection{Proof of Theorem \ref{theorem:improved}}
\label{app:m0}

To establish the total number of bits transmitted in the protocol, we analyze the contributions of each of the three modules separately. By summing these contributions, we obtain an upper bound on the overall redundancy. The following theorems provide key results for bounding the expected number of transmitted bits in each module:

\begin{theorem}
    \label{theorem:en1}
    Let \(s>0\) be the segment length multiplier. The expected number of bits transmitted in the Matching Module, denoted as \( \E{N_I} \), is upper-bounded by:
    \begin{equation}
        \E{N_I} \leq \frac{2}{s}n\beta\log\frac{1}{\beta} + o(n\beta\log \frac{1}{\beta}).
    \end{equation}    
\end{theorem}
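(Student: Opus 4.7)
The plan is to directly enumerate the bits exchanged in the Matching Module and extract the leading-order term. There are two distinct communications: Alice transmits every pivot to Bob, and Bob reports back the outcome of his pivot-selection optimization. I expect the first to yield the stated leading term $\frac{2}{s}n\beta\log\frac{1}{\beta}$, while the second falls into the $o(n\beta\log\frac{1}{\beta})$ remainder.

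First, I would count the number of pivots $N_P$. Alice tiles her length-$n$ sequence with alternating pivots of length $L_P$ and segments of length $L_S = s/\beta$, so $N_P = \lfloor n/(L_S + L_P)\rfloor$. Plugging in $L_P = \lceil 3s + 8 + 2\log(1/\beta)\rceil = 2\log\frac{1}{\beta} + O(1)$ and using $\beta L_P = \Theta(\beta\log\frac{1}{\beta}) = o(1)$ as $\beta \to 0$, a short expansion of $n\beta/(s + \beta L_P)$ yields $N_P = \frac{n\beta}{s}(1 + o(1))$.

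Next, I would bound Alice's and Bob's bits separately. Alice's cost is exactly $L_P \cdot N_P$ since each pivot is transmitted verbatim, so
\[
L_P\cdot N_P \;=\; \bigl(2\log\tfrac{1}{\beta} + O(1)\bigr)\cdot \tfrac{n\beta}{s}\bigl(1+o(1)\bigr) \;=\; \tfrac{2}{s}n\beta\log\tfrac{1}{\beta} + o\bigl(n\beta\log\tfrac{1}{\beta}\bigr).
\]
Bob's acknowledgment of his selected pivot subset can be encoded as a bit vector of length $N_P = \Theta(n\beta/s)$, which is $o(n\beta\log\frac{1}{\beta})$ as $\beta \to 0$ since $1/\log\frac{1}{\beta}\to 0$. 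Alice's contribution is deterministic and Bob's is almost-surely bounded by $N_P$, so taking expectations preserves the bound and gives the claim.

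The main obstacle is bookkeeping rather than analysis: ensuring that the ceiling in $L_P$, the floor in $N_P$, the additive $3s+8$ piece, and the chosen encoding for Bob's feedback (a bit vector rather than a list of indices, the latter of which would introduce an extra $\log N_P$ factor) all land cleanly inside the $o(n\beta\log\frac{1}{\beta})$ remainder rather than perturbing the leading coefficient $2/s$.
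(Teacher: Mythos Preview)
Your proposal is correct and matches the paper's approach essentially line for line: the paper also counts the number of pivots as $k-1$ with $k \le \frac{n\beta}{s}+o(1)$, writes $\E{N_I}=(k-1)L_P+(k-1)$ for Alice's pivot bits plus Bob's one-bit-per-pivot feedback, and extracts the leading term $\frac{2}{s}n\beta\log\frac{1}{\beta}$ from the $2\log\frac{1}{\beta}$ piece of $L_P$. Your treatment of the floor/ceiling and the additive constants landing in the remainder is exactly the bookkeeping the paper sweeps into its final $o(n\beta\log\frac{1}{\beta})$.
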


\begin{theorem}
    \label{theorem:en2}
    Let \( s > 0 \) be the segment length multiplier. Let \( c > 0 \) be the delimiter length coefficient in the Deletion Recovery Module. Let \( w \) be the maximum number of deletions that can be corrected using a set of multi-deletion correction codes. Let \( a_i \geq 1 \) for \( i \in [1, \dots, w] \) denote the efficiency of the \( i \)-deletion correction code, where correcting \( i \) deletions in a sequence of length \( q \) requires \( i a_i \log q \) bits. Let \( a = \max\{a_1, a_2, \ldots, a_w\} \). The expected number of bits transmitted in the Deletion Recovery Module, denoted as \( \E{N_{II}} \), is upper-bounded by:
    \begin{equation}
        \E{N_{II}} \leq 2 \frac{s+1}{s} \left( \frac{2^w}{2^w-1}c + a \right) n \beta \log \frac{1}{\beta} + o(n \beta \log \frac{1}{\beta})
    \end{equation}
\end{theorem}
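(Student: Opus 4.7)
My plan is to decompose the expected cost of Module II into per-section contributions, separating the bits spent on \emph{delimiters} at internal nodes of the divide-and-conquer recursion tree from those spent on the multi-deletion correction codes at its \emph{leaves}, and then sum over sections. For section $i$ of length $L_i$ containing $D_i$ deletions, each leaf with $d_\ell \leq w$ deletions contributes $d_\ell a_{d_\ell} \log L_\ell \leq a\, d_\ell \log L_i$ bits, for a total of at most $a D_i \log L_i$ bits across the leaves of one section; each of the $T_i$ internal nodes contributes $c \log L_i$ delimiter bits, up to retransmission and false-match overhead that I will show is lower-order via the same Chernoff-type tail bounds invoked for the Matching Module in Appendix \ref{app:m1}. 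Combined, the expected cost of section $i$ is at most $\bigl(c\, \mathbb{E}[T_i] + a\, \mathbb{E}[D_i]\bigr) \log L_i$.

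The technical crux is bounding $T(D) := \mathbb{E}[T_i \mid D_i = D]$ under a uniformly random placement of the $D$ deletions. Because each deletion independently lands in the left or right half of any split, $T$ satisfies $T(D) = 0$ for $D \leq w$ and $T(D) = 1 + 2\, \mathbb{E}[T(D_1)]$ for $D > w$, with $D_1 \sim \mathrm{Binomial}(D, 1/2)$. I will prove by strong induction on $D$ that $T(D) \leq \alpha\, (D-w)^+$, where $\alpha := 2^w/(2^w-1)$. The base case $D = w+1$ is tight: only the two degenerate splits that push all deletions to one side keep the recursion going, so $T(w+1)(1 - 2^{-w}) = 1$ and $T(w+1) = \alpha$ exactly. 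For $D > w+1$, I isolate the $D_1 = D$ term from the right-hand side, absorb it into the factor $(1 - 2^{1-D})$ on the left, and apply the inductive hypothesis to all smaller indices; the required inequality reduces to $\mathbb{E}[\min(D_1, w)] \geq w/2 + 1/(2\alpha)$. This holds with equality at $D = w+1$ by a direct binomial computation, and extends to all $D > w+1$ by the stochastic dominance of $\mathrm{Binomial}(D, 1/2)$ over $\mathrm{Binomial}(w+1, 1/2)$ together with the monotonicity of $x \mapsto \min(x, w)$.

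Combining the per-section bound with the key lemma, linearity of expectation gives $\mathbb{E}[\sum_i T_i] \leq \alpha\, \mathbb{E}[\sum_i D_i] = \alpha n \beta$, while on the typical event the section lengths satisfy $\log L_i \leq \log(L_S + L_P) + o(1)$, with the rare long sections (arising from missed pivots) handled again by the Matching-Module tail bound. The total becomes $(\alpha c + a)\, n \beta \log L_S + o(n\beta \log(1/\beta))$. Substituting $L_S = s/\beta$ and writing $\log(s/\beta) = \log(1/\beta) + O(1)$ yields the leading coefficient $\tfrac{2^w}{2^w-1} c + a$; the remaining multiplicative factor $2(s+1)/s$ will come from the log-factor inflation caused by each section carrying a pivot of length $L_P$ alongside its segment of length $L_S$, together with the overhead from sections that span two units when a pivot is not recovered in Module I.

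The hardest step will be the $T(D)$ induction. The kink in $(D-w)^+$ at $D = w$ blocks a direct plug-and-chug argument: the inductive hypothesis is unavailable at $D_1 = D$, so the self-referential term must be peeled off before substituting, and then $\mathbb{E}[(D_1 - w)^+]$ must be rewritten in terms of $\mathbb{E}[\min(D_1, w)]$ to put the required inequality into a form that the stochastic-dominance argument can close. Recovering the sharp leading constant $\alpha = 2^w/(2^w-1)$ rather than a weaker bound such as $2D/w$ depends entirely on equality being attained at the $D = w+1$ base case, so verifying the exact binomial identity $\mathbb{E}[\min(D_1, w)] = (w+1)/2 - 2^{-(w+1)}$ there is the linchpin of the whole argument.
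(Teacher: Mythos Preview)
Your decomposition into delimiter (internal-node) cost and code (leaf) cost mirrors the paper's split $\E{N_c(t)}+\E{N_d(t)}$, and the per-section recursion is the same engine. Two genuine differences are worth flagging. First, your internal-node bound $T(D)\le \frac{2^w}{2^w-1}(D-w)^+$ is strictly sharper than the paper's $\E{N_c(t)}/l\le \frac{2^w}{2^w-1}(t-1)$, and your proof is more conceptual: you reduce the induction step to $\E[\min(D_1,w)]\ge w/2+1/(2\alpha)$ and close it by stochastic dominance, whereas the paper substitutes the hypothesis directly and uses the binomial identity $\sum_{j=1}^{v}\binom{v+1}{j}=2^{v+1}-2$. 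Both are correct. Second, your leaf bound $\sum_\ell d_\ell a_{d_\ell}\log L_\ell\le a D_i\log L_i$ follows in one line from $\sum_\ell d_\ell=D_i$ and $L_\ell\le L_i$; the paper obtains the same conclusion via a separate induction (Lemma~\ref{lemma:end}).

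Where you go astray is the factor $2(s+1)/s$. Your argument, as written, already accounts for all Module-II bits and delivers leading term $(\alpha c+a)\,n\beta\log(1/\beta)$---which is \emph{tighter} than the theorem by exactly that factor. In the paper the $2(s+1)/s$ does not come from ``log-factor inflation due to the pivot'' but from two deliberately crude steps in Lemma~\ref{lemma:edf}: bounding $\E[r]\le 2$ (via $p\ge 1/2$) and separately bounding the section count by $k'\le n\beta/s$, even though $k'\,\E[r]=k$ exactly. Your proposed source for the factor is wrong, but since your bound implies the stated one, the theorem still follows. One actual gap: your plan to dismiss long sections via ``Chernoff-type tail bounds'' from the Matching Module does not work---a pivot is missed with probability $1-p=\Theta(\beta\log(1/\beta))$, not exponentially small, so long sections are not a tail event. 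The paper handles this by modeling the number $r$ of segments per section as geometric and bounding $\E[r]$, $\E[r^2]$ (Lemma~\ref{lemma:edf}); you will need the same device, or the direct computation $\E[\sum_i D_i\log L_i]=n\beta\log L_S+O(n\beta)$ via $\sum_i r_i=k$ and $\E[\sum_i r_i\log r_i]=O(k)$.
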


\begin{theorem}
    \label{theorem:en3}
    The expected number of bits transmitted in the Error Correction Module, denoted as \( \E{N_{III}} \), is upper-bounded by:
    \begin{equation}
        \E{N_{III}} \leq 2n\beta\log\frac{1}{\beta} + o(n\beta\log\frac{1}{\beta})
    \end{equation}
\end{theorem}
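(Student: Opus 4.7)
The plan is to reduce the Error Correction Module to coding over a binary symmetric channel and then expand the binary entropy tightly near the origin. First, I would appeal to Theorem \ref{theorem:pivot length} (whose proof is deferred to Appendix \ref{app:pivot length proof}) to conclude that, after the Matching and Deletion Recovery Modules, each bit of Bob's reconstructed sequence disagrees with Alice's with probability at most $p := 2\beta + o(\beta)$. As stated in the Error Correction Module discussion, the residual errors act as substitutions, so synchronization reduces to reliable transmission over a binary symmetric channel with crossover $p$. Employing a capacity-achieving code (e.g.\ a suitable linear or polar code), Alice can convey the residual corrections using at most $nH(p)$ parity-check bits, up to lower-order overhead absorbed into the $o(n\beta\log(1/\beta))$ term in the final bound.

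Next, I would carry out a short asymptotic expansion of the binary entropy. Writing $p = 2\beta(1+o(1))$, one has $H(p) = -p\log p - (1-p)\log(1-p) = p\log(1/p) + O(p)$, and
\[
\log(1/p) \;=\; \log(1/\beta) \,-\, \log 2 \,+\, o(1),
\]
so that
\[
p\log(1/p) \;=\; 2\beta\log(1/\beta) \,+\, o\!\left(\beta\log(1/\beta)\right).
\]
The $O(p) = O(\beta)$ contribution from $-(1-p)\log(1-p)$ is of smaller order than $\beta\log(1/\beta)$ and is absorbed into the same error term. Multiplying by $n$ yields $\E{N_{III}} \leq nH(p) \leq 2n\beta\log(1/\beta) + o(n\beta\log(1/\beta))$, which is the desired bound.

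The main obstacle I anticipate is the first step: establishing $p \leq 2\beta + o(\beta)$ cleanly. This requires coupling the failure modes of the first two modules — missed or false pivot matches and mis-identified delimiters — and showing that each contributes at most an $o(\beta)$ additive amount on top of the baseline $2\beta$ (which itself arises because every deletion can corrupt at most two reconstructed positions). That bookkeeping is precisely the content of Theorem \ref{theorem:pivot length} and the related pivot-length condition $L_P \geq 3s + 8 + 2\log(1/\beta)$ selected in the Matching Module, so here it is invoked as a black box. Once that inequality is in hand, the remainder of the proof is a routine asymptotic manipulation of $H$ and should fit in a few lines.
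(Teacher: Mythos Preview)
Your approach is essentially the paper's: bound the residual substitution probability by $2\beta + o(\beta)$, invoke the BSC capacity to cost this at $nH(2\beta+o(\beta))$ bits, and expand the entropy. Two corrections on the bookkeeping you flag as the ``main obstacle.'' First, the factor $2$ in $2\beta$ does \emph{not} arise because a single deletion corrupts two reconstructed positions; rather, Theorem~\ref{theorem:pivot length} bounds the per-\emph{pivot} mismatch probability by $\beta+o(\beta)$, and since each section is bordered by two pivots, a union bound gives a per-section error of $2\beta+o(\beta)$ (this is Corollary~\ref{corollary:prob err}). Second, Theorem~\ref{theorem:pivot length} only handles pivot mismatches; the delimiter-mismatch contribution in the Deletion Recovery Module is bounded separately by Lemma~\ref{theorem:delimiter length} (via Theorem~1 of \cite{VT}) and shown to be $o(\beta)$, so you should invoke both, not just Theorem~\ref{theorem:pivot length}. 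The paper then applies a per-section bit permutation to argue the residual errors behave as i.i.d.\ substitutions before applying the BSC bound, a step your proposal tacitly assumes. With these attributions fixed, your argument coincides with the paper's.
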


% \begin{lemma}
%     \label{lemma:separate}
%     Let \( s > 0 \) be the segment length multiplier. Let \( c > 0 \) be the delimiter length coefficient in the Deletion Recovery Module. Let \( w \) be the maximum number of deletions that can be corrected using a set of multi-deletion correction codes. Let \( a_i \geq 1 \) for \( i \in [1, \dots, w] \) denote the efficiency of the \( i \)-deletion correction code, where correcting \( i \) deletions in a sequence of length \( q \) requires \( i a_i \log q \) bits. Define \( a = \max\{a_1, a_2, \ldots, a_w\} \). The expected number of bits transmitted in the Matching Module \( \E{N_I}\), the Deletion Recovery Module \( \E{N_{II}}\), and the Error Correction Module \( \E{N_{III}} \) are respectively upper-bounded as follows:
%     \begin{equation}
%         \begin{split}
%             \E{N_I} &\leq \frac{2}{s} n\beta\log\frac{1}{\beta} + o(n\beta\log \frac{1}{\beta}),\\
%             \E{N_{II}} 
%             &\leq 2 \frac{s+1}{s} \left( \frac{2^w}{2^w-1}c + a \right) n \beta \log \frac{1}{\beta} + o(n \beta \log \frac{1}{\beta}),\\
%             \E{N_{III}} &\leq 2n\beta\log\frac{1}{\beta} + o(n\beta\log\frac{1}{\beta}).
%         \end{split}
%     \end{equation}
% \end{lemma}

Appendices \ref{app:m1}, \ref{app:m2}, and \ref{app:m3} will be devoted to proving the upper bounds stated in Theorem \ref{theorem:en1}, \ref{theorem:en2} and \ref{theorem:en3}, respectively.

Once the number of bits transmitted in the three modules is estimated, the total number of bits transmitted in the protocol can be bounded as:
\begin{equation}
    \begin{split}
        &\E{N_{I}} + \E{N_{II}} + \E{N_{III}} 
        \\&\leq \frac{2}{s} n \beta \log \frac{1}{\beta} + 2 \frac{s+1}{s} \left( \frac{2^w}{2^w-1}c + a \right) n \beta \log \frac{1}{\beta}  \\ 
        &\quad  +2 n \beta \log \frac{1}{\beta} + o(n \beta \log \frac{1}{\beta}) \\
        &= 2 \frac{s+1}{s} \left( \frac{2^w}{2^w-1}c + a + 1 \right) n \beta \log \frac{1}{\beta} + o(n \beta \log \frac{1}{\beta}) \\
        &\leq 2 \frac{s+1}{s} \left( \frac{2^w}{2^w-1}c + a + 2 \right) n \beta \log \frac{1}{\beta}.
    \end{split}
\end{equation}

This result directly leads to the bound established in Theorem \ref{theorem:improved}.

\subsection{Number of Bits Transmitted in Modules I}
\label{app:m1}
This appendix focuses on estimating the number of bits transmitted in the Matching Module. In the first module, bits are transmitted in two stages. First, the encoder sends pivot strings to the decoder. Then the decoder provides one-bit feedback per pivot to indicate whether each pivot was selected.  

Recall that the pivot length \(L_P\) need to satisfy the constraint  
\[
L_P \geq 3s + 8 + 2 \log \frac{1}{\beta},
\]  
where \(\beta\) is the deletion rate of the sequence, and \(s > 0\) is the segment length multiplier. A detailed explanation of this lower bound, along with its proof, can be found in Appendices \ref{app:m3} and \ref{app:pivot length proof} of this appendix, respectively.

By ensuring that this pivot length constraint holds and setting the segment length to \( L_S = \frac{s}{\beta} \), the number of segments \( k \) for a sequence of length \( n \) is upper-bounded by:  

\begin{equation}
    \begin{split}
        k &= \frac{n + L_P}{L_S + L_P} \leq \frac{n + L_P}{L_S} \\
        &= \frac{n\beta}{s} + (3 + \frac{8}{s})\beta + \frac{2}{s}\beta\log \frac{1}{\beta} 
        = \frac{n\beta}{s} +o(1),
    \end{split}
\end{equation}  
where \( o(1) \) is in terms of \( n \).  

Since there are \( k \) segments, there are \( k-1 \) pivots in the sequence. Given \( k \), the number of bits transmitted in the Matching Module is upper-bounded by:

\begin{equation}
    \E{N_I} = (k-1)L_P + k-1 
    \leq \frac{2}{s}n\beta\log\frac{1}{\beta} + o(n\beta\log \frac{1}{\beta}).
\end{equation}

Therefore, Theorem \ref{theorem:en1} is proven.

\subsection{Number of Bits Transmitted in Module II}
\label{app:m2}

This appendix focuses on deriving the number of bits transmitted in the Deletion Recovery Module. It is divided into two subsections. The first subsection analyzes a single section of length \( n_s \) with \( t \) deletions. The second subsection extends this analysis to multiple sections with varying lengths and deletion counts, ultimately establishing an upper bound on the total number of bits transmitted in the entire Deletion Recovery Module.

\subsubsection{Number of Bits Transmitted in a Single Section of Module II}

For a single section, the derivation separates the contributions of the encoder and decoder. Specifically, the number of bits transmitted by the encoder is divided into two components: the bits sent as delimiters to partition the sequence and the bits sent to correct the deletions. The number of bits transmitted by the decoder is inherently linked to the delimiters. Thus, the problem of bounding the number of bits transmitted in a single section reduces to bounding the contributions of the delimiters and deletion correction codes.

Let \( \E{N_{s_2}(t)} \) denote the total expected number of bits transmitted in a section over \( t \) deletions, with the length of the section being \( n_s \). Define \( \E{N_{AB}(t)} \) as the expected number of bits sent from the encoder to the decoder and \( \E{N_{BA}(t)} \) as the feedback bits sent from the decoder. Then, clearly, we have:

\begin{equation}
    \E{N_{s_2}(t)} = \E{N_{AB}(t)} + \E{N_{BA}(t)}.
\end{equation}

The term \( \E{N_{AB}(t)} \) can be further divided into \( \E{N_c(t)} \), the expected number of bits used for the delimiters in the divide-and-conquer approach, and \( \E{N_d(t)} \), the expected number of redundancy bits required to correct deletions. Thus, we have:

\begin{equation}
    \E{N_{AB}(t)} = \E{N_c(t)} + \E{N_d(t)}.
\end{equation}

Recall that in the Deletion Correction Module, the decoder only sends bits to the encoder to inform it of the result of the dividing process. Thus, \( \E{N_{BA}(t)} \) is related to \( \E{N_c(t)} \) only. Each time a delimiter is sent from the encoder to the decoder, the decoder needs to send several bits of information back to the encoder, informing it of the current case and what is required for the next step. One such case occurs when the delimiter is not found. In the case where the delimiter is found, the decoder will know how many deletions there are for each side. The decoder will then send a string of bits back to the encoder to inform it of the current situation.

Recall that we are using a set of deletion correction codes that can correct up to \( w \) deletions. After division, one side could have \( 0 \) to \( w \) deletions, or it may have more than \( w \) deletions. Thus, for each side, there are \( w+2 \) possible situations. Therefore, there are at most \( (w+2)^2 \) cases when the delimiter is found and both sides are considered.

Notice that the number of cases when the delimiter is found could actually be restricted to \( (w+2)^2 - 1 \), as the case where both sides have no deletions would not be possible, because the dividing process would not have started in the first place if there were no deletions. Thus, when we include the case where the delimiter is not found, there are at most \( (w+2)^2 \) possible cases.

Therefore, the following lemma provides an upper bound on the number of bits the decoder needs to send back to the encoder after receiving a delimiter:

\begin{lemma}
\label{lemma:ba_bits}
Let \( w>0 \) be the maximum number of deletions that can be corrected by the deletion correction code. Let \(l\) be the length of the delimiter. We have that:

\[
\E{N_{BA}(t)} \leq 2\log(w+2) \cdot \frac{\E{N_c(t)}}{l}.
\]
\end{lemma}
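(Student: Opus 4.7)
The plan is to reduce the bound on $\E{N_{BA}(t)}$ to a counting argument that multiplies the expected number of delimiters sent by the maximum number of feedback bits per delimiter. The paragraph preceding the lemma already sets up the essential observation: the decoder only sends bits back to the encoder in response to a delimiter transmission, so the feedback total is completely controlled by the number of delimiter events.

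First, I would observe that each delimiter transmitted by the encoder contributes exactly $l$ bits to $N_c(t)$, so the number of delimiters sent in a section is $N_c(t)/l$. Taking expectations, the expected number of delimiter events is $\E{N_c(t)}/l$. Next, I would enumerate the possible feedback cases: for each of the two halves produced by a successful split, the decoder either reports $0, 1, \ldots, w$ deletions or reports ``more than $w$ deletions,'' giving $w+2$ possibilities per side and $(w+2)^2$ combined cases. To this I would add the case that the delimiter was not found, and then subtract the impossible joint case in which both sides have zero deletions (since a split would not have been requested in the first place), leaving at most $(w+2)^2$ distinct feedback outcomes overall.

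Encoding an alphabet of size $(w+2)^2$ takes at most $\lceil \log((w+2)^2)\rceil \leq 2\log(w+2)$ bits (I would keep the bound in the clean form $2\log(w+2)$, consistent with the lemma). Multiplying the per-delimiter feedback cost by the expected number of delimiters and using linearity of expectation, I would conclude
\[
\E{N_{BA}(t)} \leq 2\log(w+2)\cdot \frac{\E{N_c(t)}}{l},
\]
which is exactly the bound claimed.

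I do not expect a real obstacle here: the argument is purely combinatorial bookkeeping, and everything except the case enumeration is immediate from linearity of expectation. The only subtle point worth stating carefully is the justification that every feedback transmission is triggered by a delimiter (so no feedback bits are ``unaccounted for''), and that the same upper bound $2\log(w+2)$ applies uniformly to every case, including the ``delimiter not found'' event, so that the bound can be factored out of the expectation.
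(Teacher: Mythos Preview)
Your proposal is correct and mirrors the paper's own argument almost verbatim: the paper justifies the lemma by exactly the same two observations you make, namely that $\E{N_c(t)}/l$ counts the expected number of delimiter rounds and that the at most $(w+2)^2$ feedback cases (after excluding the ``both sides zero'' case and re-adding the ``delimiter not found'' case) require at most $2\log(w+2)$ bits per round. There is no additional content in the paper's derivation beyond what you outline.
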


The term \( 2\log(w+2) \) represents the maximum number of bits the decoder transmits in a single communication round, while \( \frac{\E{N_c(t)}}{l} \) denotes the expected number of communication rounds required to divide the section.

After expressing both $\E{N_{AB}(t)}$ and $\E{N_{BA}(t)}$ in terms of $\E{N_c(t)}$ and $\E{N_d(t)}$, the next goal is to determine the actual expressions for these terms. Both will be estimated using induction.

We start by estimating $\E{N_c(t)}$. Clearly, when there are fewer than $w+1$ deletions, the current section can be immediately corrected without sending more delimiters. Therefore,

\begin{equation}
    \E{N_c(t)} = 0 \quad \text{for} \quad t = 0, 1, 2, \dots, w.
\end{equation}

Next, we consider the scenario where there are \( w+1 \) deletions in the section. This scenario illustrates how to analyze the problem when the number of deletions exceeds \( w \). In this case, the encoder sends a delimiter to the decoder to initiate the divide-and-conquer process. Once the first delimiter is located, two possible cases arise. 

In the first case, each half of the section contains at most \( w \) deletions and can be corrected using deletion correction codes. In the second case, all deletions occur within one half of the section, meaning the problem remains a \( (w+1) \)-deletion correction problem. The probability of the second case occurring is given by \( 2 \cdot \frac{1}{2^{w+1}} = \frac{1}{2^w} \). Thus, for \( \E{N_c(w+1)} \), we derive the following equation:

\begin{equation}
    \E{N_c(w+1)} \leq l + \frac{1}{2^w} \E{N_c(w+1)} + \left( 1 - \frac{1}{2^w} \right) \cdot 0.
\end{equation}

Since the substrings are smaller after splitting, the expected number of bits transmitted as delimiters in a substring with \( w+1 \) deletions can be upper-bounded by the original value of \( \E{N_c(w+1)} \). Therefore, the result of the equation provides an upper bound for \( \E{N_c(w+1)} \). Simplifying the equation, we get:

\begin{equation}
    \E{N_c(w+1)} \leq \frac{2^w}{2^w - 1} l.
\end{equation}

Given this equation, we now prove the following lemma:
\begin{lemma}
    \label{lemma:enc}
    Let \( w>0 \) be the maximum number of deletions that can be corrected by the deletion correction codes. Let \(l\) be the length of the delimiter. For a section with $t$ deletions, the expected number of bits sent as delimiters, $\E{N_c(t)}$, satisfies:

\begin{equation}
    \E{N_c(t)} \leq \frac{2^w}{2^w - 1} (t - 1) l.
\label{eq:ENc}
\end{equation}
\end{lemma}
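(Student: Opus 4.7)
The plan is to prove the lemma by strong induction on the deletion count $t$. Setting $C := \frac{2^w}{2^w - 1}\, l$, the target inequality reads $\E{N_c(t)} \leq C(t-1)$. The base cases $t = 1, \ldots, w$ are immediate because no delimiters are ever sent when $t \leq w$, so $\E{N_c(t)} = 0 \leq C(t-1)$. The case $t = w+1$ is exactly the recursion already worked out in the excerpt, which yields $\E{N_c(w+1)} \leq C = C \cdot 1 \leq Cw = C(t-1)$.

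For the inductive step at $t \geq w+2$, I would condition on the split induced by the delimiter. Since each deletion falls independently into the left or right half with probability $1/2$, the number $I$ of deletions on the left side is distributed as $\mathrm{Bin}(t, 1/2)$, and writing $p_i = \binom{t}{i} 2^{-t}$, the recursion expands to
\[
\E{N_c(t)} \leq l + \sum_{i=0}^{t} p_i \bigl(\E{N_c(i)} + \E{N_c(t-i)}\bigr).
\]
Next I would isolate the two ``all-on-one-side'' terms $i = 0$ and $i = t$, each of which contributes a copy of $\E{N_c(t)}$ on the right-hand side (using $\E{N_c(0)} = 0$), and apply the induction hypothesis to all intermediate terms $1 \leq i \leq t-1$, obtaining $\E{N_c(i)} + \E{N_c(t-i)} \leq C(i-1) + C(t-i-1) = C(t-2)$. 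After pulling $\sum_{i=1}^{t-1} p_i = 1 - 2p_0$ out of the sum and moving the $2p_0\, \E{N_c(t)}$ term to the left, the bound becomes $\E{N_c(t)} \leq \frac{l}{1 - 2p_0} + C(t-2)$. The final step is to note that $2p_0 = 2^{1-t} \leq 2^{-w}$ for $t \geq w+1$, which forces $\frac{l}{1 - 2p_0} \leq \frac{2^w}{2^w - 1}\, l = C$, and so $\E{N_c(t)} \leq C(t-1)$, closing the induction.

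The main obstacle will be handling the implicit ``bad split'' terms $i = 0$ and $i = t$ in which all $t$ deletions end up on one side of the delimiter; these do not strictly reduce the deletion count and therefore reintroduce $\E{N_c(t)}$ on the right-hand side. The entire constant $\frac{2^w}{2^w - 1}$ in the statement arises as the solution of this implicit equation, and it is also the reason the hypothesis $t \geq w+1$ is essential: it is exactly the regime in which $2p_0 \leq 2^{-w}$, giving the denominator enough room to produce $C$ rather than a larger quantity. A secondary subtlety, flagged in the excerpt preceding the lemma, is the possibility that a delimiter is not located by the decoder due to deletions within it; in that event the encoder transmits another delimiter, and I would absorb the expected cost of such retransmissions into the stated leading-order bound, consistent with how Theorem \ref{theorem:en2} treats lower-order contributions.
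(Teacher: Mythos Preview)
Your proof is correct and follows essentially the same induction argument as the paper: the same recursion, the same isolation of the $i=0$ and $i=t$ terms that reintroduce $\E{N_c(t)}$, the same use of $\sum_{j=1}^{t-1}\binom{t}{j}=2^{t}-2$, and the same final inequality $\frac{1}{1-2^{1-t}}\leq\frac{2^w}{2^w-1}$ for $t\geq w+1$. Your write-up is in fact slightly cleaner (explicit strong induction, the shorthand $C$), and you correctly flag the issue of unlocated delimiters, which the paper's own proof silently ignores.
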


\begin{proof}
We prove Lemma \ref{lemma:enc} by induction.

From the discussion above, it can be observed that Lemma \ref{lemma:enc} holds for all $t \leq w+1$. Now assume Lemma \ref{lemma:enc} holds for some integer $v > w+1$. We will show that Lemma \ref{lemma:enc} holds for $t = v+1$ as well. Similar to the previous analysis, when there are $t = v+1$ deletions in a section, after one division, the probability that there are $j$ deletions in the left half and $v+1-j$ deletions in the right half is given by

\begin{equation*}
    \frac{1}{2^{v+1}} \binom{v+1}{j}.
\end{equation*}

After sending a delimiter, with probability \( \frac{1}{2^{v+1}} \binom{v+1}{j} \), the original \( (v+1) \)-deletion correction problem reduces to two smaller subproblems: a \( j \)-deletion correction problem and a \( (v+1-j) \)-deletion correction problem. \rev{By recursively breaking the problem into smaller subproblems, each of which is easier to analyze, we can eventually establish an upper bound by bounding all the subproblems.} Additionally, notice that all deletions can fall into one half of the section with probability \( \frac{1}{2^v} \), similar to what has been discussed previously, causing the problem to remain a \( (v+1) \)-deletion correction problem.

Given this analysis, the following recurrence holds:  

\begin{equation}
    \begin{split}
        \label{inc}
        \E{N_c&(v+1)} \leq l + \frac{1}{2^v} \E{N_c(v+1)} \\ 
        &+ \sum_{j=1}^{v} \frac{1}{2^{v+1}} \binom{v+1}{j} (\E{N_c(j)} + \E{N_c(v+1-j)}).
    \end{split}
\end{equation}

We can further simplify $\E{N_c(v+1)}$ as follows:

\begin{equation}
    \begin{split}
        &\E{N_c(v+1)} \\&\stackrel{(a)}{\leq} \frac{\left(l + \sum_{j=1}^{v} \frac{\binom{v+1}{j}}{2^{v+1}}  (\E{N_c(j)} + \E{N_c(v+1-j)})\right)}{1 - 2^{-v}} \\
        &\stackrel{(b)}{\leq} \frac{l + \sum_{j=1}^{v} \frac{\binom{v+1}{j}}{2^{v+1}}  (\frac{2^w}{2^w - 1} (j-1) + \frac{2^w}{2^w - 1} ((v+1-j)-1))l}{1 - 2^{-v}} \\
        &\stackrel{(c)}{=} \frac{l + \sum_{j=1}^{v} \frac{1}{2^{v+1}} \binom{v+1}{j} \frac{2^w}{2^w - 1} (v-1) l}{1 - 2^{-v}} \\
        &\stackrel{(d)}{=} \frac{l}{1 - 2^{-v}} + \frac{2^w}{2^w - 1} (v-1) l \\
        &\stackrel{(e)}{<} \frac{2^w}{2^w - 1} ((v+1) - 1) l.
    \end{split}
\end{equation}

Here (a) follows from applying the recurrence in Equation (\ref{inc}), which expresses \( \E{N_c(v+1)} \) in terms of the expected values of \( N_c \) at smaller indices. (b) leverages the inductive hypothesis that $\E{N_c(t)}  \leq \frac{2^w}{2^w - 1} (t - 1) l$ for $t \leq v$. (c) factors out the common term from the summation. (d) simplifies the sum \( \sum_{j=1}^{v} \binom{v+1}{j} = 2^{v+1} - 2 \), using the binomial sum identity. (e) is based on the fact that \( \frac{1}{1 - 2^{-v}}<\frac{2^w}{2^w - 1}\) with \(w<v\).

% \begin{equation}
%     \begin{split}
%         EN_c(k+1) &= \frac{1}{1 - 2^{-k}} \left(l \sum_{j=1}^{k} \frac{1}{2^{k+1}} \binom{k+1}{j} EN_c(j)  \\
%         &+ \sum_{j=1}^{k}\frac{1}{2^{k+1}}\binom{k+1}{j}\mathbb{E}[N_c(k+1-j)] \right) \\
%         &\leq \frac{l + \sum_{j=1}^{k} \frac{1}{2^{k+1}} \binom{k+1}{j} (\frac{2^w}{2^w - 1} (j-1) l + \frac{2^w}{2^w - 1} ((k+1-j)-1) l)}{1 - 2^{-k}} \\
%         &= \frac{l + \sum_{j=1}^{k} \frac{1}{2^{k+1}} \binom{k+1}{j} \frac{2^w}{2^w - 1} (k-1) l}{1 - 2^{-k}} \\
%         &= \frac{l}{1 - 2^{-k}} + \frac{2^w}{2^w - 1} (k-1) l \\
%         &< \frac{2^w}{2^w - 1} ((k+1) - 1) l.
%     \end{split}
% \end{equation}

% \begin{align}
%     EN_c(k+1) &= \frac{l + \sum_{j=1}^{k} \frac{1}{2^{k+1}} \binom{k+1}{j} (EN_c(j) + EN_c(k+1-j))}{1 - 2^{-k}} \\
%     &\leq \frac{l + \sum_{j=1}^{k} \frac{1}{2^{k+1}} \binom{k+1}{j} (\frac{2^w}{2^w - 1} (j-1) l + \frac{2^w}{2^w - 1} ((k+1-j)-1) l)}{1 - 2^{-k}} \\
%     &= \frac{1}{1 - 2^{-k}} l + \sum_{j=1}^{k} \frac{1}{2^{k+1}} \binom{k+1}{j} \frac{2^w}{2^w - 1} (k-1) l}{1 - 2^{-k}} \\
%     &= \frac{l}{1 - 2^{-k}} + \frac{2^w}{2^w - 1} (k-1) l \\
%     &< \frac{2^w}{2^w - 1} ((k+1) - 1) l.
% \end{equation}

Thus, it is proven that $\E{N_c(t)} \leq \frac{2^w}{2^w - 1} (t - 1) l$ holds for $t = v+1$. Therefore, Lemma \ref{lemma:enc} is proven by induction.
\end{proof}

Now recall that the length of the delimiter is determined by the equation \(l = c \log n_s\), where $c$ is a changeable parameter. The bound for $\E{N_c(t)}$ can therefore be expressed as:

\begin{equation}
    \E{N_c(t)} \leq \frac{2^w}{2^w - 1} (t - 1) c \log n_s < \frac{2^w}{2^w - 1} c t \log n_s.
\end{equation}

Next, we shall determine a bound for $\E{N_d(t)}$, which represents the average number of bits transmitted to actually correct the deletions in a section.

First, consider the case where the number of deletions in a section is no more than \(w\). Recall that it is stated in Theorem \ref{theorem:improved} that we use \(ia_i\log q\) bits to correct \(i\) deletions in a sequence of length \(q\) when \(i\leq w\). Therefore, $\E{N_d(t)}$ is given by:

\begin{equation}
    \E{N_d(t)} \leq a_t t \log n_s \text{ for } t\leq w.
\end{equation}

Now when the number of deletions in a section exceeds \( w \), the section must be divided before it can be corrected. In this case, the protocol will require the use of a combination of deletion correction codes to handle all the deletions, rather than relying on a single code. 

As previously derived for \( \E{N_c(t)}\), after one division, the deletion correction problem is broken down into two subproblems, each with no more deletions than before. With probability \( \frac{1}{2^t} \binom{t}{j} \), after one division, a section with \( t \) deletions is split such that the left half contains \( j \) deletions and the right half contains \( t-j \) deletions. Applying a similar reasoning as before, we obtain the following equation for \( \E{N_d(t)} \):
\begin{equation}
    \label{eq:ind}
    \begin{split}
        \E{N_d(t)} &\leq \frac{1}{2^{t-1}} \E{N_d(t)} \\&\quad+ \sum_{j=1}^{t-1} \left[ \frac{\binom{t}{j}}{2^t} \left( \E{N_d(j)} + \E{N_d(t-j)} \right) \right]
    \end{split}
\end{equation}

Recall that \( a = \max \{ a_1, a_2, \dots, a_w \} \), we then have the following lemma:

\begin{lemma}
    \label{lemma:end}
    For a section with \( t \) deletions, the expected number of bits sent as redundancy for deletion correction \( \E{N_d(t)} \) satisfies:
    \begin{equation}
        \E{N_d(t)} \leq a t \log n_s
    \label{eq:ENd}
    \end{equation}
\end{lemma}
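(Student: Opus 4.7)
The plan is to prove Lemma \ref{lemma:end} by induction on $t$, mirroring the structure used in the proof of Lemma \ref{lemma:enc}. The base cases cover $t = 0, 1, \ldots, w$: when $t=0$ no bits are sent so the bound holds trivially, and for $1 \leq t \leq w$ the section is corrected directly by the $t$-deletion code using at most $t a_t \log n_s \leq a t \log n_s$ bits, which matches the claimed inequality because $a = \max\{a_1,\ldots,a_w\} \geq a_t$.

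For the inductive step, I would assume $\E{N_d(k)} \leq a k \log n_s$ for every $k \leq v$ with $v \geq w$, and show the bound holds for $t = v+1$. Starting from the recurrence in Equation (\ref{eq:ind}) and applying the inductive hypothesis to both $\E{N_d(j)}$ and $\E{N_d(v+1-j)}$ inside the sum, the pair of terms collapses into $a(v+1)\log n_s$, which is a constant with respect to $j$. Using the identity $\sum_{j=1}^{v}\binom{v+1}{j} = 2^{v+1} - 2$, the recurrence simplifies to
\begin{equation*}
\E{N_d(v+1)} \leq \frac{1}{2^v}\E{N_d(v+1)} + \left(1 - \frac{1}{2^v}\right) a(v+1)\log n_s,
\end{equation*}
after which moving the $\frac{1}{2^v}\E{N_d(v+1)}$ term to the left side and dividing by $1 - 2^{-v}$ yields $\E{N_d(v+1)} \leq a(v+1)\log n_s$, completing the induction.

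I do not expect a serious obstacle here: the recurrence in (\ref{eq:ind}) has the same additive self-referential form as the one used for $\E{N_c(t)}$, and the linearity of the target bound $a t \log n_s$ in $t$ is exactly what is needed so that $\E{N_d(j)} + \E{N_d(v+1-j)}$ aggregates cleanly to $a(v+1)\log n_s$ independent of $j$. The only subtlety worth flagging is making sure the base cases are handled uniformly, which is why the definition $a = \max_{i \leq w} a_i$ is essential—if instead one used $a_t$ on the right-hand side, the summation step would fail because the children of a split might invoke different efficiency coefficients. Since $a$ upper-bounds all of them, the inductive bound propagates without loss, and the lemma follows.
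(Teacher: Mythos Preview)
Your proposal is correct and follows essentially the same approach as the paper: induction on $t$ with base cases $t\le w$ handled via $a=\max_i a_i$, and the inductive step obtained by plugging the hypothesis into the recurrence~(\ref{eq:ind}), collapsing $\E{N_d(j)}+\E{N_d(v+1-j)}$ to $a(v+1)\log n_s$, applying $\sum_{j=1}^{v}\binom{v+1}{j}=2^{v+1}-2$, and cancelling the $(1-2^{-v})$ factor. If anything, your statement of the inductive hypothesis (strong induction over all $k\le v$) is slightly more precise than the paper's phrasing.
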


\begin{proof}
    We prove this lemma by induction.\\
Since \( a \) is defined as the maximum over all efficiencies of the deletion correction codes, it follows that Lemma \ref{lemma:end} holds for \( t \leq w \), as:
\begin{equation}\label{eq:nd_upper}
    \E{N_d(t)} = a_t t \log n_s \leq a t \log n_s \quad \text{for} \quad t \leq w.
\end{equation}

Now, assume that Lemma \ref{lemma:end} holds for \( t = v \). Then, for \( t = v + 1 \), we have:  

\begin{equation}
    \begin{split}
        &\E{N_d(v+1)} \stackrel{(a)}{\leq}  \left( 1 + \frac{1}{2^v - 1} \right) \times \\
        &\quad\sum_{j=1}^{v} \left[ \frac{\binom{v+1}{j}}{2^{v+1}}\left( \E{N_d(j)} + \E{N_d(v+1-j)} \right) \right] \\
        &\stackrel{(b)}{\leq} \left( 1 + \frac{1}{2^v - 1} \right) \sum_{j=1}^{v} \left[ \frac{\binom{v+1}{j}}{2^{v+1}} a (v+1) \log n_s \right] \\
        &\stackrel{(c)}{=} \left( 1 + \frac{1}{2^v - 1} \right) a (v+1) \log n_s \frac{\sum_{j=1}^{v} \binom{v+1}{j}}{2^{v+1}} \\
        &\stackrel{(d)}{=} \left( 1 + \frac{1}{2^v - 1} \right) a (v+1) \log n_s \frac{2^{v+1} - 2}{2^{v+1}} \\
        &\stackrel{(e)}{=} \frac{2^{v+1}}{2^{v+1} - 2} \frac{2^{v+1} - 2}{2^{v+1}} a (v+1) \log n_s \\
        &\stackrel{(f)}{=} a (v+1) \log n_s
    \end{split}
\end{equation}  

Here (a) follows from applying the recurrence in Equation (\ref{eq:ind}). (b) leverages the inductive hypothesis, which assumes that \( \E{N_d(j)} \leq a j \log n_s \) for all \( j \leq v \), allowing us to bound each term in the summation. (c) factors out the common term from the summation. (d) simplifies the sum \( \sum_{j=1}^{v} \binom{v+1}{j} = 2^{v+1} - 2 \), using the binomial sum identity. (e) rewrites the fraction to show cancellation explicitly. (f) follows from canceling out the first two terms  in previous expression, completing the induction step.  

Thus, the bound is valid for \( t = v + 1 \). Therefore, Lemma \ref{lemma:end} is proven.
\end{proof}

With the bounds for \( \E{N_c(t)} \) and \( \E{N_d(t)} \) derived in Lemma \ref{lemma:enc} and \ref{lemma:end}, respectively, we can also bound \( \E{N_{AB}(t)} \) and \( \E{N_{BA}(t)} \) as follows:
\begin{equation}
    \begin{split}
        \E{N_{AB}(t)} &= \E{N_c(t)} + \E{N_d(t)}\\ &\leq \left( \frac{2^w}{2^w - 1} c + a \right) t \log n_s \\
        \E{N_{BA}(t)} &\leq 2 \log(w+2) \frac{\E{N_c(t)}}{l} \\ &\leq \frac{2^{w+1} \log(w+2)}{2^w - 1} (t - 1)
    \end{split}
\end{equation}

Therefore, the expected number of bits required to transmit to correct a section with length \( n_s \) and deletion \( t \) is:
\begin{equation}
    \label{eq:ens}
    \begin{split}
        \E{&N_{s_2}(t)} = \E{N_{AB}(t)} + \E{N_{BA}(t)} \\&\leq \left( \frac{2^w}{2^w - 1} c + a \right) t \log n_s + \frac{2^{w+1} \log(w+2)}{2^w - 1} (t - 1).
    \end{split}
\end{equation}
\subsubsection{Total Number of Bits Transmitted in Module II}
Note that $\E{N_{s_2}(t)}$ represents the expected number of bits required to correct a \textit{single} section. However, the entire sequence is divided into multiple sections, each with a different length and a varying number of deletions. 

Let the $i$th section be denoted as $F_i$, $|F_i|$ be the length of the section, and $\delta_i$ is the number of deletions within that section. Given Equation (\ref{eq:ens}), the expected number of bits transmitted for the $i$th section can be expressed as
\begin{equation}
    \left(\frac{2^w}{2^w-1}c+a\right)\delta_i \log |F_i| + \frac{2^{w+1} \log(w+2)}{2^w-1}(\delta_i-1).
\end{equation}

Let \(k'\) denote the number of sections for the entire sequence. Since the total number of bits transmitted is the sum of the bits transmitted for each individual section, the total number of bits transmitted in the Deletion Recovery Module, denoted as $\E{N_{II}}$, is upper-bounded by
\begin{equation}
    \label{eq:n2_sum}
    \begin{split}
        &\mathbb{E}\left[\sum_{i=1}^{k'}  \left(\frac{2^w}{2^w-1}c+a\right)\delta_i \log |F_i|\right] \\
        &+ \mathbb{E}\left[\frac{2^{w+1} \cdot  \log(w+2)}{2^w-1}\sum_{i=1}^{k'}(\delta_i-1)  \right] .
    \end{split}
\end{equation}

% \begin{equation}
%     \begin{split}
%         &\E{N_{II}} \\
%         &= \mathbb{E}\left[\sum_{i=1}^{k'} \left( \underbrace{ \left(\frac{2^w}{2^w-1}c+a\right)\delta_i \log |F_i|}_{a} + \frac{2^w \cdot 2 \log(w+2)}{2^w-1}(\delta_i-1) \right) \right] .
%     \end{split}
% \end{equation}
To calculate the second term in Equation (\ref{eq:n2_sum}), we note that the expected number of deletions across the entire sequence is \(n\beta\) and, thus, we have the relation:
\begin{equation}
    \mathbb{E} \left[ \sum_{i=1}^{k'} \delta_i \right] = n \beta.
\end{equation}

The next lemma helps to evaluate the first term in Equation (\ref{eq:n2_sum}):
\begin{lemma}
    \label{lemma:edf}
    $\mathbb{E}[\delta_i \log |F_i|]$ is upper-bounded by
    \begin{equation}
        8(s+1) + 2(s+1) \log (s+1) + 2(s+1) \log \frac{1}{\beta}.
    \end{equation}
\end{lemma}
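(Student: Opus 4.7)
The plan is to first reduce $\E{\delta_i\log|F_i|}$ to an expression involving only the (pre-deletion) section length $|F_i|$, and then to bound the latter in terms of the number of encoder segments that end up fused into section $F_i$. Since Lemma \ref{lemma:end} uses the encoder-side length of the section as the argument of the logarithm, we interpret $|F_i|$ as the pre-deletion length of section $i$. Because positions inside $F_i$ are, up to a small pivot-region bias treated at the end, deleted independently with probability $\beta$, conditioning on $|F_i|$ gives $\E{\delta_i\mid|F_i|}\le \beta\,|F_i|$ and hence
\[
\E{\delta_i\log|F_i|}\;\le\;\beta\,\E{|F_i|\log|F_i|}.
\]
The problem then reduces to bounding $\E{|F_i|\log|F_i|}$.

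Next, let $M_i\ge 1$ be the number of original encoder segments merged into section $F_i$: $M_i=1$ in the ideal case when both delimiting pivots are the closest correctly matched pivots on either side, and $M_i\ge 2$ only when at least one intervening pivot was missed or mis-matched. Deterministically $|F_i|\le M_i(L_S+L_P)$. Substituting $L_S=s/\beta$ and the pivot-length bound $L_P\le 3s+8+2\log(1/\beta)$ from Theorem \ref{theorem:pivot length} (proved in Appendix \ref{app:pivot length proof}), one checks that $L_S+L_P\le (s+1)/\beta$ for sufficiently small $\beta$, so
\[
\E{|F_i|\log|F_i|}\;\le\;(L_S+L_P)\Bigl(\E{M_i}\log(L_S+L_P)+\E{M_i\log M_i}\Bigr).
\]
The key input is that the chosen $L_P$ makes the probability that any given pivot fails to match polynomially small in $\beta$, so $M_i$ is stochastically dominated by a geometric random variable with vanishing parameter; a standard geometric-tail computation then yields $\E{M_i}\le 2$ and $\E{M_i\log M_i}$ bounded by a small absolute constant. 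Plugging in, multiplying by $\beta$, and distributing the logarithm produces exactly $2(s+1)(\log(s+1)+\log(1/\beta))+8(s+1)$.

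The main obstacle is the geometric-tail bookkeeping, combined with a careful handling of the bias in the conditional binomial estimate: inside a section of $M_i$ segments there are $M_i-1$ mis-matched pivots whose deletion patterns are conditioned (they must have been corrupted enough to escape matching), which could inflate $\E{\delta_i\mid|F_i|}$ above $\beta|F_i|$ by at most $(M_i-1)L_P$ in expectation. One must verify that this excess, when weighted by $\log|F_i|$, contributes only lower-order terms---which follows because $\E{M_i-1}$ is polynomially small in $\beta$ while $L_P$ is only $O(\log(1/\beta))$. Once these two subtleties and the tail sums for $M_i$ are under control, the remainder is routine bookkeeping of constants.
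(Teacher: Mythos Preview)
Your approach is essentially the paper's: reduce to $\beta\,\E{|F_i|\log|F_i|}$, write $|F_i|\le r(L_S+L_P)$ where $r$ (your $M_i$) is the number of merged segments, observe that $r$ is geometric with success probability $p\ge 1/2$, and use $\E{r}\le 2$ together with $\E{r\log r}\le\E{r^2}\le 8$ and $\log(L_S+L_P)\le\log(s{+}1)+\log(1/\beta)$. One small slip: Theorem~\ref{theorem:pivot length} gives a \emph{lower} bound $L_P\ge 3s+8+2\log(1/\beta)$, not an upper bound; the paper instead uses the cruder $L_P\le 1/\beta$ (valid since $L_P$ is the smallest integer meeting the constraint and $\beta$ is small) to obtain $\beta(L_S+L_P)\le s+1$ directly, and it bypasses your conditioning-bias discussion by citing the identity $\E{\delta_i\log|F_i|}=\E{\beta|F_i|\log|F_i|}$ from~\cite{ori}.
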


\begin{proof}
    In the Appendix of \cite{ori}, it is shown that
    \begin{equation}
        \E{\delta_i\log|F_i|} = \E{\beta |F_i|\log|F_i|}.
    \end{equation}
    Thus, the problem reduces to finding a bound for $\E{\beta |F_i|\log|F_i|}$.
    
    Recall that a section is the concatenation of several segments and pivots. Let the section $F_i$ consist of $r$ segments and $r-1$ pivots. The length of the section is calculated as 
    \[
    |F_i| = rL_S + (r-1)L_P \leq r(L_S + L_P).
    \]
    Therefore, we have the following upper bound for $\E{\beta |F_i|\log|F_i|}$:
    \begin{equation}
        \begin{split}
            \E{\beta |F_i|\log|F_i|} & \leq \E{\beta r(L_S + L_P) \log r(L_S + L_P)} \\
            & = \beta (L_S + L_P) \E{r \log r + r \log (L_S + L_P)}.
        \end{split}
    \end{equation}
    Since the length of the pivot satisfies $L_P \leq \frac{1}{\beta} = \frac{1}{s} L_S$, we have $\beta (L_S + L_P) \leq s + 1$. Additionally, we know that $r \log r \leq r^2$. Using these upper bounds, we can further express the upper bound for $\E{\beta |F_i|\log|F_i|}$ as
    \begin{equation}
        \E{\beta |F_i|\log|F_i|} \leq (s+1) \E{r^2} + (s+1) \log (L_S + L_P) \E{r}.
    \end{equation}

    Next, we know that the probability of finding a pivot is
    \begin{equation}
        p = \frac{k'-1}{k-1} = 1 - L_P \beta + 2\beta + o(\beta).
    \end{equation}
    Notice that $r$ follows a geometric distribution with
    \begin{equation}
        \Pr(|F_i| = rL_S + (r-1)L_P) = p(1-p)^{r-1},
    \end{equation}
    so
    \begin{equation}
        \begin{split}
            \E{r} & = \frac{1}{p}, \\
            \E{r^2} & = \text{Var}(r) + \E{r}^2 = \frac{2 - p}{p^2}.
        \end{split}
    \end{equation}
    Since $p \geq \frac{1}{2}$, we have
    \begin{equation}
        \begin{split}
            \E{r} & \leq 2, \\
            \E{r^2} & \leq 8.
        \end{split}
    \end{equation}
    Furthermore, it holds that
    \begin{equation}
        \log(L_S + L_P) \leq \log \frac{s+1}{s} L_S = \log(s+1) + \log \frac{1}{\beta}.
    \end{equation}
    Substituting these inequalities into the bound, we get
    \begin{equation}
        \E{\delta_i \log |F_i|} \leq 8(s+1) + 2(s+1) \log(s+1) + 2(s+1) \log \frac{1}{\beta}.
    \end{equation}
    This completes the proof of Lemma \ref{lemma:edf}.\\
\end{proof}

Using the previous equations, $\E{N_{II}}$ can be upper-bounded as follows:

\begin{equation}
    \begin{split}
        &\E{N_{II}}\\ 
        &\leq k' \left( \frac{2^w}{2^w-1}c+a \right) \mathbb{E} \left[ \delta_i \log |F_i| \right] + \frac{2^w \cdot 2 \log(w+2)}{2^w-1} n \beta \\
        &\stackrel{(a)}{\leq} \frac{n \beta}{s} \left( \frac{2^w}{2^w-1}c+a \right) (s+1)\left( 8 + 2 \log(s+1) + 2 \log \frac{1}{\beta} \right) \\
        & \quad + \frac{2^w \cdot 2 \log(w+2)}{2^w-1} n \beta \\
        &= 2 \frac{s+1}{s} \left( \frac{2^w}{2^w-1}c + a \right) n \beta \log \frac{1}{\beta} + o(n \beta \log \frac{1}{\beta}),
    \end{split}
\end{equation}
where $(a)$ is based on the fact that $k' \leq k = \frac{n \beta}{s}$. This finishes the proof of Theorem \ref{theorem:en2}.

\subsection{Number of Bits Transmitted in Module III}  
\label{app:m3}  

In this section, we analyze the number of bits transmitted in the Error Correction Module. Since this module is responsible for correcting errors from the first two modules, it is essential to first establish an upper bound on the error probability of the first two modules before estimating the number of bits transmitted in the Error Correction Module.

To estimate the probability of error in the first two modules, we analyze the error probability in each module separately. We first focus on the Matching Module. As discussed in Section~\ref{section:algorithm}, when the decoder identifies a pivot in its sequence, it is not necessarily the correct match for the corresponding pivot in the encoder. A false pivot may be incorrectly selected as the true match. Intuitively, increasing the pivot length reduces the likelihood of false pivot matches, as longer pivots are less likely to appear erroneously in the decoder’s sequence. The following theorem formalizes this relationship between pivot length and matching error probability.

\begin{theorem} \label{theorem:pivot length}
    Let \( s > 0 \) be the segment length multiplier. If the pivot length \( L_P \) satisfies 
    \begin{equation}
        \label{eq:pl}
        L_P \geq 3s+8+2\log \frac{1}{\beta},
    \end{equation}
    then, with probability at least \(1-2^{-\Omega(n)}\), the probability of a matching error for each pivot is at most \(\beta+o(\beta)\).
\end{theorem}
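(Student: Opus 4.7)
The plan is to decompose a matching error for a fixed pivot $P_i$ into the existence of a \emph{false pivot}---a length-$L_P$ substring of $Y$ equal to $P_i$ at a position distinct from the true image of $P_i$'s location under deletions---bound the expected count of false pivots, and then concentrate over all $k = \Theta(n\beta/s)$ pivots.

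First, I would formalize the per-pivot error. Because $X$ has i.i.d.\ uniform bits and $P_i$ is determined by the bits of $X$ at its own pivot indices, any non-overlapping window of $Y$ matches $P_i$ with probability exactly $2^{-L_P}$. The graph-based optimization (inherited from \cite{ori}) restricts Bob's search for $P_i$ to a local window whose size is $\Theta(L_S+L_P) = \Theta(s/\beta)$ around the expected image of the pivot. A union bound over candidate positions gives an expected number of false pivots of order $O(s/\beta)\cdot 2^{-L_P}$, and Markov's inequality converts this into a probability bound on the false-pivot event for a fixed $P_i$.

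Second, I would pick $L_P$ large enough that this bound is at most $\beta + o(\beta)$. Solving $O(s/\beta)\cdot 2^{-L_P}\leq \beta$ yields $L_P\geq 2\log(1/\beta) + \Theta(s)$; the specific threshold $3s+8+2\log(1/\beta)$ in the statement comes from tracking (i) the combinatorial constant from the search-window size in terms of $s$, (ii) the slack needed so that the true pivot's shifted position lies inside the search window with high probability over the deletion process, and (iii) a small additive buffer that pushes residual lower-order terms into the $o(\beta)$ remainder.

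Third, I would upgrade the per-pivot estimate to a simultaneous high-probability statement. Distinct pivots occupy disjoint regions of $X$, and a Chernoff bound on the cumulative deletion count along the sequence shows that their search windows in $Y$ are essentially disjoint except on an event of probability $2^{-\Omega(n)}$. Conditioned on this concentration event, the per-pivot false-match indicators are (approximately) independent Bernoullis, so a further Chernoff bound shows that each individual pivot has matching-error probability at most $\beta + o(\beta)$, except on an additional $2^{-\Omega(n)}$ bad event.

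The main obstacle will be controlling the correlations between $P_i$ and the bits of $Y$ that overlap with the true pivot region: candidate windows within distance $L_P$ of the true location share random bits with $P_i$ itself, so their match probabilities are not cleanly $2^{-L_P}$. I would split candidates into \emph{nearby} (within $L_P$ of the true location) and \emph{far}: there are only $O(L_P)=O(\log(1/\beta))$ nearby candidates, whose total contribution to the false-match probability is crudely at most $O(\log(1/\beta))\cdot 2^{-L_P}=o(\beta)$, while the far candidates satisfy genuine independence and produce the leading $O(s/\beta)\cdot 2^{-L_P}$ term. This clean near/far decomposition is what makes the $2\log(1/\beta)$ scaling of $L_P$ suffice, and I expect the bulk of the proof in Appendix~\ref{app:pivot length proof} to consist of carefully verifying the $3s+8$ additive slack against these two contributions.
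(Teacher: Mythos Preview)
Your proposal takes a genuinely different route from the paper. The paper does \emph{not} argue per pivot and then aggregate via Chernoff; it runs a \emph{global} union bound over the number $\alpha k$ of false pivots among the $(R+o(\beta))k$ selected ones. For each $\alpha>\beta$ it (i) counts the ways to choose which $(R-\alpha+o(\beta))k$ selected pivots are correct, (ii) counts the ways to choose cardinal indices \emph{and} starting positions for the $\alpha k$ false pivots consistent with the joint ordering/distance constraints of the graph optimization, and (iii) multiplies by $2^{-L_P\alpha k}$, the probability that the $\alpha k L_P$ required bits of $Y$ all take the prescribed values. Integrating the resulting exponent over $\alpha\in(\beta,R+o(\beta)]$ (split at $\alpha=\tfrac12$) yields $2^{-\Omega(n)}$. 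The $3s$ term arises precisely in step (ii): the number of admissible position configurations is bounded by $\binom{2s\alpha k(1+1/s+o(1))}{2\alpha k}\le 2^{(3s+2)\alpha k}$, because with segment length $s/\beta$ each section carries $\sim s$ expected deletions of slack rather than $\sim 1$.

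Your framework has two concrete gaps relative to this. First, the graph-based optimization is not a local search in a window of size $\Theta(s/\beta)$ around each pivot; it is a global consistent selection, and the freedom in placing one false pivot depends on where the \emph{other} selected pivots land. Your calculation $O(s/\beta)\cdot 2^{-L_P}\le\beta$ would only force $L_P\ge 2\log(1/\beta)+\log s+O(1)$, so it neither recovers nor explains the $3s$ additive term you set out to justify. Second, the ``approximately independent Bernoullis $\Rightarrow$ Chernoff'' step is exactly what the paper sidesteps by counting configurations directly; the false-match indicators are coupled through the global selection constraints, and establishing enough independence to run Chernoff is not obviously easier than the configuration count itself.
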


Proof of Theorem \ref{theorem:pivot length} can be found in Appendix \ref{app:pivot length proof}. Theorem \ref{theorem:pivot length} implies that by setting the pivot length to satisfy Equation \eqref{eq:pl}, the probability of incorrect pivot matching is upper-bounded by \(\beta+o(\beta)\) with high probability. Since each section is the common neighbor between two selected pivots, and the probability of an individual pivot mismatch is at most \(\beta+o(\beta)\), the probability that a section in the decoder does not correspond to a deleted version of its counterpart in the encoder is upper-bounded by \(2\beta+o(\beta)\). When a section in the decoder is not the deleted version of the corresponding section in the encoder after the Matching Module, the corrected sequence produced by the Deletion Recovery Module may deviate from the original encoder section, introducing errors that must be handled in the Error Correction Module.  

Next, we estimate the probability of error in the Deletion Recovery Module. Errors in this module arise due to mismatched delimiters in the divide-and-conquer approach. The following lemma provides an upper bound on the probability of such errors.

\begin{lemma} \label{theorem:delimiter length}  
    Let \( c > 0 \) be the delimiter length coefficient in the Deletion Recovery Module. For \( c=3 \), the probability of error when reconstructing a section in the Deletion Recovery Module is upper-bounded by \(\beta^3\log \frac{1}{\beta} = o(\beta)\).  
\end{lemma}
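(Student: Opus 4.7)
The strategy is to bound the probability that any delimiter transmitted during the divide-and-conquer recursion on a single section is mismatched, and then take expectation over the random number of deletions in that section. First, I would bound the probability of a single delimiter mismatch. With $c = 3$, each delimiter has length $l = 3 \log L_s$, so a random $l$-bit pattern at any fixed position in the decoder's sequence coincides with the delimiter with probability $2^{-l} = L_s^{-3}$. Because the delimiter is chosen at a known midpoint of the current sub-section, the corresponding position in the decoder is shifted from that midpoint by the number of deletions to its left, a Binomial random variable with mean $O(L_s \beta) = O(s)$. A standard Chernoff tail bound confines this shift to a window of size $O(\log(1/\beta))$ except on an event of probability $\beta^{\omega(1)}$, whose worst-case contribution to the overall error is negligible. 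Consequently the decoder's effective search window has size $O(\log(1/\beta))$, and a union bound over its positions yields a per-delimiter false-match probability of at most $O(L_s^{-3} \log(1/\beta)) = O(\beta^3 \log(1/\beta))$.

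Next, I would apply a union bound over every delimiter transmission in the recursion for the section. Lemma~\ref{lemma:enc} shows that a section with $t$ deletions incurs on average $\frac{2^w}{2^w-1}(t-1) = O(t)$ delimiter transmissions; a comparable high-probability bound follows by the same Markov/Chernoff-type argument applied to the recursion tree. Hence, conditional on $t$ deletions in the section, the probability that at least one delimiter in the recursion is mismatched is at most $O(t \cdot \beta^3 \log(1/\beta))$. Finally, I would take expectation over $t$: the number of deletions in a section of length at most $L_s = s/\beta$ is $\mathrm{Bin}(L_s, \beta)$-distributed with constant mean $s$, and delimiters are only transmitted when $t > w \geq 1$, so only $t \geq 2$ contributes and $\E{t \cdot \mathbf{1}\{t \geq 2\}} = O(1)$. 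Combining all three steps gives a per-section error probability of $O(\beta^3 \log(1/\beta)) = o(\beta)$, as claimed.

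The hard part is the first step: quantifying the effective search window at every level of the recursion, where the sub-section length and shift distribution vary with depth, and verifying that the low-probability event where the Chernoff bound fails contributes only negligibly at the $o(\beta)$ scale. The remaining two steps are essentially bookkeeping built on Lemma~\ref{lemma:enc} and the Binomial moments of $t$.
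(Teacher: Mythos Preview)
Your argument is correct and follows the same high-level logic as the paper, but the paper short-circuits most of your work by invoking an existing result. Specifically, the paper quotes Theorem~1 of~\cite{VT}, which already bounds the total probability of any delimiter mismatch in the divide-and-conquer recursion by $\tfrac{d\log n}{2n^{c}}$ for a length-$n$ section with $d$ deletions. It then observes that replacing VT codes by multi-deletion codes can only \emph{decrease} the number of splits, so the same bound applies a fortiori to the improved protocol, and finally substitutes the expected values $\mathbb{E}[d]=s/p$ and $\mathbb{E}[n]=s/(p\beta)$ with $c=3$ to obtain $\tfrac{p^{2}\beta^{3}}{2s^{2}}\log\tfrac{s}{p\beta}=o(\beta)$. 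Your three steps (per-delimiter mismatch probability, union bound over delimiters via Lemma~\ref{lemma:enc}, expectation over $t$) are essentially a from-scratch re-derivation of what that cited theorem already packages, with the advantage that your version is self-contained and makes the dependence on $w$ explicit.

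One simplification worth noting: the Chernoff layer you introduce is unnecessary, and removing it eliminates the ``hard part'' you flag. Once you condition on $t$ deletions in the current (sub-)section, the decoder \emph{knows} $t$ (from the length discrepancy), and the delimiter's position in the decoder is shifted from the encoder's midpoint by at most $t$ \emph{deterministically}. Hence the search window has size at most $t$ at every level of the recursion, with no tail event to control, and the per-delimiter false-match probability is at most $t\cdot 2^{-l}=t\,L_s^{-3}$. Combined with the $O(t)$ delimiter count from Lemma~\ref{lemma:enc} and $\mathbb{E}[t^{2}]=O(1)$, this gives the section error probability as $O(\beta^{3})$, even a touch stronger than the stated $\beta^{3}\log\tfrac{1}{\beta}$.
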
  

Lemma~\ref{theorem:delimiter length} provides an upper bound on the probability of errors introduced by the Deletion Recovery Module. To prove this lemma, we refer to Theorem 1 in \cite{VT}, which bounds the probability of error in a deletion correction protocol using the divide-and-conquer approach combined with VT codes.  

\begin{theorem}[Theorem 1, \cite{VT}]  
\label{theorem:delimiter error}  
    Suppose there are \(d\) deletions in a sequence of length \(n\). Let \(l=c\log n\) be the length of the delimiter with fixed parameter \(c\). Then, the probability of error is at most \(\frac{d\log n}{2n^c}\).  
\end{theorem}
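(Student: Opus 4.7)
The plan is to establish the bound $\frac{d\log n}{2n^c}$ as the product of two quantities: (i) an upper bound on the false-match probability incurred by a single delimiter transmission, and (ii) an upper bound on the total number of delimiter transmissions performed during the divide-and-conquer procedure on a length-$n$ sequence containing $d$ deletions. A union bound over all delimiter transmissions then yields the claim.

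For (i), I would fix one round of the protocol in which the encoder transmits a delimiter $D$ of length $l=c\log n$ drawn from the center of the current piece. The decoder holds a corrupted version of that piece and scans for occurrences of $D$. Treating the non-delimiter bits as uniform i.i.d.\ Bernoulli$(1/2)$, the probability that any fixed length-$l$ window of the decoder's sequence equals $D$ is $2^{-l}=n^{-c}$. A naive union bound over all $\leq n$ candidate positions gives a per-delimiter false-match probability of at most $n\cdot n^{-c}=n^{1-c}$. To sharpen this to $\frac{1}{2n^c}$, I would exploit the protocol's selection rule: a false match only causes a decoding error when it displaces the true delimiter, so it suffices to union-bound over candidate positions lying strictly on one side of the midpoint (a factor of $\tfrac{1}{2}$), and further to restrict to the fraction of those positions that can actually be picked before the true delimiter under the protocol's tie-breaking rule (a factor of $1/n$).

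For (ii), I would analyze the recursion tree of the divide-and-conquer procedure. Each internal node of this tree corresponds to one delimiter transmission that splits its piece into two halves, and a leaf corresponds to a piece with $\leq 1$ deletion (the VT-only setting of the cited result), which is then decoded directly. Two structural observations bound the tree. First, each split reduces the length of the current piece by approximately a factor of two, so any root-to-leaf path has depth at most $\log n$. Second, only pieces that contain at least one deletion are ever split, and at most $d$ such deletion-bearing pieces can propagate through the recursion. Combining, the total number of delimiter transmissions is at most $d\log n$.

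Union-bounding over all delimiter transmissions then yields
\[
\Pr(\text{error}) \;\leq\; d\log n \cdot \frac{1}{2n^c} \;=\; \frac{d\log n}{2n^c},
\]
as claimed. The main obstacle is obtaining the tight per-delimiter bound $\frac{1}{2n^c}$ rather than the weaker $n^{1-c}$ that arises from a naive union bound; this requires carefully leveraging both the geometry of false-match displacements relative to the midpoint and the protocol's rule for selecting among multiple candidate matches. The $d\log n$ count of delimiter transmissions, by contrast, is a clean combinatorial consequence of the "at most $d$ active deletion-bearing pieces, each of recursive depth $\log n$" observation.
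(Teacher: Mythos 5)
This theorem is not proven in the paper at all: it is quoted verbatim from Theorem~1 of the cited reference \cite{VT} and used as a black box in the proof of Lemma~\ref{theorem:delimiter length}. So there is no in-paper proof to compare against, and your reconstruction must be judged on its own terms.

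Your high-level structure --- a union bound over all delimiter transmissions in the recursion tree --- is the right shape, and your observation that the recursion has depth at most $\log n$ (since each split roughly halves the piece) is correct. However, the way you factor the bound into $\bigl(\tfrac{1}{2n^c}\bigr)\cdot(d\log n)$ contains a genuine gap that you yourself flag: the per-delimiter false-match probability of $\tfrac{1}{2n^c}$ is obtained by multiplying the per-position probability $2^{-l}=n^{-c}$ by ``a factor of $1/n$ from the protocol's tie-breaking rule.'' That factor is not justified and, on its face, is false: if the decoder scanned linearly across a full length-$n$ piece, any of the $\Theta(n)$ positions preceding the true delimiter could produce the first match, so the tie-breaking rule alone does not shrink the candidate set to $O(1)$ positions. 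The actual reason the union bound is tractable in the cited protocol is structural and you do not invoke it: the decoder restricts its search to a window around the expected delimiter location whose width is bounded by the number of deletions $d'$ in the current piece (the decoder knows $d'$ from the length mismatch, and deletions can only shift positions by at most $d'$). That couples the per-delimiter probability (roughly $d'\cdot n^{-c}$) to the deletion count $d'$, which then telescopes cleanly when you sum over the recursion tree, since the deletion counts across any one level sum to at most $d$. Relatedly, the leading constant $\tfrac{1}{2}$ is more naturally recovered from the delimiter count than from the per-delimiter probability: at each of the $\leq \log n$ levels, only pieces with at least two deletions are split, and there can be at most $\lfloor d/2\rfloor$ such pieces per level, giving at most $\tfrac{d\log n}{2}$ delimiter transmissions. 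In short, the missing idea is the deletion-count-bounded search window, and without it the constants in your per-delimiter bound cannot be justified.
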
  

In the divide-and-conquer approach, as similarly used in \cite{VT}, the only source of error is delimiter mismatch. Therefore, Theorem~\ref{theorem:delimiter error} directly bounds the probability of delimiter mismatch when the divide-and-conquer approach is applied with VT codes. In our improved protocol, multi-deletion correction codes are used instead of VT codes, which reduces the number of required divisions. Since the probability of a single delimiter mismatch remains the same while the number of splits decreases, the probability of delimiter mismatch is also reduced. Consequently, Theorem~\ref{theorem:delimiter error} serves as a valid upper bound on delimiter mismatch probability for our improved protocol.  

In our setting, we take \( c=3 \). The expected number of deletions is given by \(\mathbb{E}[d] = \frac{1}{p}\beta L_S = \frac{s}{p}\), and the expected substring length is \(\mathbb{E}[n] = \frac{1}{p}L_S = \frac{s}{p\beta}\). \rev{Recall that \(p=\frac{k'-1}{k-1} = 1 - L_P \beta + 2\beta + o(\beta)<1\).} Substituting these values into the upper bound from Theorem~\ref{theorem:delimiter error}, we obtain:  
\[
\frac{p^2\beta ^3}{2s^2}\log \frac{s}{p\beta}<\frac{\beta^3}{s^2} \log \frac{s}{p\beta} = o(\beta).
\]  
This establishes the desired error probability bound, thus proving Lemma~\ref{theorem:delimiter length}. 

Given Theorem~\ref{theorem:pivot length} and Lemma~\ref{theorem:delimiter length}, we derive the following corollary:

\begin{corollary}
\label{corollary:prob err}
    By setting the pivot length \( L_P \) to satisfy the constraint in eq. \eqref{eq:pl} and choosing the delimiter length coefficient as \( c=3 \), the probability that a section in the decoder, after passing through the first two modules, does not exactly match its corresponding section in the encoder is upper-bounded by \( 2\beta+o(\beta) \).
\end{corollary}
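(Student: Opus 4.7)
The plan is to establish the corollary by a union bound over the two distinct sources of section-level error after the first two modules: (i) a mismatch in either of the two bounding pivots of the section in the Matching Module, and (ii) a delimiter mismatch in the divide-and-conquer procedure of the Deletion Recovery Module. A section in the decoder is determined by a pair of consecutive selected pivots; if both pivots are matched to their true counterparts in the encoder, then by construction the decoder's section is exactly a deleted version of the encoder's section entering the Deletion Recovery Module. Conditioned on this, the only way the reconstructed section can still deviate from the encoder's section is a delimiter mismatch during the subsequent recursive divisions.

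First, I would invoke Theorem \ref{theorem:pivot length}. Given that the pivot length satisfies eq.~\eqref{eq:pl}, each individual pivot is mismatched with probability at most $\beta+o(\beta)$, and this estimate itself holds with probability at least $1-2^{-\Omega(n)}$ over the deletion pattern. A section is delimited by two consecutive selected pivots, so by a union bound over these two pivots, the probability that at least one of them is mismatched is at most $2\beta+o(\beta)$. In that event, the section handed to the Deletion Recovery Module need not be a deletion copy of the corresponding encoder section, and correct reconstruction cannot be guaranteed.

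Next, I would condition on the complementary event that both bounding pivots are correctly matched, and apply Lemma \ref{theorem:delimiter length} with $c=3$ to bound the probability of a delimiter mismatch inside the Deletion Recovery Module by $\beta^3\log\frac{1}{\beta}=o(\beta)$. Combining via a final union bound yields total section-level error probability at most $(2\beta+o(\beta))+o(\beta)=2\beta+o(\beta)$. The rare event of probability $2^{-\Omega(n)}$ in Theorem \ref{theorem:pivot length} is absorbed into the $o(\beta)$ slack since $2^{-\Omega(n)}$ decays faster than any polynomial in $\beta$ for the regime of interest.

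The main subtlety, rather than a genuine obstacle, is ensuring that the two sources of error are composed correctly: the pivot-matching bound must hold before conditioning on deletion locations within a section, while the delimiter-mismatch bound is derived per section over the induced deletion distribution inside that section. Since Theorem \ref{theorem:pivot length} gives a per-pivot guarantee that holds with high probability over the global deletion pattern, and Lemma \ref{theorem:delimiter length} gives a per-section guarantee under the i.i.d.~deletion model inherited by each section, a straightforward union bound is legitimate and delivers the stated $2\beta+o(\beta)$ bound without any further refinement.
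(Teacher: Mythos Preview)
Your proposal is correct and follows essentially the same approach as the paper: the paper states that Corollary~\ref{corollary:prob err} ``follows directly from applying the union bound to the probability of errors in the first two modules,'' using Theorem~\ref{theorem:pivot length} to get $2\beta+o(\beta)$ from the two bounding pivots and Lemma~\ref{theorem:delimiter length} to get the additional $o(\beta)$ from delimiter mismatch. Your write-up is in fact more explicit than the paper's about the conditioning structure and the absorption of the $2^{-\Omega(n)}$ event, but the argument is the same.
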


Corollary \ref{corollary:prob err} follows directly from applying the union bound to the probability of errors in the first two modules. This corollary establishes an upper bound on the probability that a section of the decoder does not perfectly match its corresponding section in the encoder after the first two modules. 

Since a section consists of multiple bits, and each bit has the same probability of being deleted, we assume that bit errors are uniformly distributed across the section. By permuting the bits within each section, we ensure that errors appear independent and identically distributed, making the error model closely resemble a binary symmetric channel. Consequently, the probability of an individual bit being incorrectly synchronized is also upper-bounded by \(2\beta + o(\beta)\). Given this bound on the bit error probability, the expected number of bits transmitted in the Error Correction Module is estimated in the following lemma:

\begin{lemma}
    \label{lemma:moduleIII}
    Let \( H(\cdot) \) denote the binary entropy function, defined as \( H(p) = p\log\frac{1}{p} + (1-p)\log\frac{1}{1-p}\) for \( 0 < p < 1 \). The expected number of bits transmitted in the Error Correction Module, denoted as \( \mathbb{E}[N_{III}] \), is upper-bounded by:
    \begin{equation}
       \mathbb{E}[N_{III}] = nH(2\beta + o(\beta)) = 2n\beta\log\frac{1}{\beta} + o(n\beta\log\frac{1}{\beta}).
    \end{equation}
\end{lemma}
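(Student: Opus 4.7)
The plan is to reduce the residual error model to a binary symmetric channel (BSC) and then invoke Shannon's channel coding theorem. From the discussion preceding Lemma \ref{lemma:moduleIII}, Corollary \ref{corollary:prob err} together with the permutation argument shows that, with probability at least $1 - 2^{-\Omega(n)}$, each bit in the sequence output by the first two modules disagrees with the corresponding bit in Alice's sequence with probability at most $p := 2\beta + o(\beta)$, with errors behaving as i.i.d.\ Bernoulli$(p)$ under the permutation. Thus the problem Alice and Bob face in Module~III is equivalent to communicating $n$ bits over a BSC of crossover $p$ when Alice knows the noiseless input, which is exactly the Slepian--Wolf / systematic-coding setting.

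First, I would invoke a standard capacity-achieving linear code (e.g.\ a random linear ensemble or a polar code) to argue that Alice can transmit $nH(p) + o(n)$ parity check bits to Bob so that Bob recovers Alice's sequence with probability $1 - o(1)$. Since the protocol is permitted to send additional feedback in the unlikely failure event or in the low-probability event (of probability $2^{-\Omega(n)}$) that the bit error probability exceeds $p$, these rare contributions can be absorbed into the $o(n\beta\log\frac{1}{\beta})$ term by noting that even a naive retransmission of the entire sequence contributes at most $n \cdot 2^{-\Omega(n)} = o(n\beta\log\frac{1}{\beta})$ in expectation. This yields
\begin{equation}
    \mathbb{E}[N_{III}] \leq nH(2\beta + o(\beta)) + o(n\beta\log\tfrac{1}{\beta}).
\end{equation}

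Second, I would expand $H(p) = p\log\frac{1}{p} + (1-p)\log\frac{1}{1-p}$ asymptotically as $\beta \to 0$. For the first term, writing $p = 2\beta(1 + o(1))$ gives
\begin{equation}
    p \log \tfrac{1}{p} = 2\beta\log\tfrac{1}{\beta} - 2\beta\log 2 + o(\beta\log\tfrac{1}{\beta}) = 2\beta\log\tfrac{1}{\beta} + o(\beta\log\tfrac{1}{\beta}).
\end{equation}
For the second term, since $\log\frac{1}{1-p} = \Theta(p)$ for small $p$, we get $(1-p)\log\frac{1}{1-p} = O(\beta) = o(\beta\log\frac{1}{\beta})$. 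Multiplying by $n$ and combining with the preceding inequality yields the claimed bound.

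The main obstacle is the rigorous justification of the BSC reduction and the handling of the low-probability bad events, rather than the entropy expansion (which is routine). Specifically, one must argue that (i) the bit permutation indeed decorrelates residual errors sufficiently for a capacity-achieving code to succeed, and (ii) the rare events where the per-bit error probability exceeds $2\beta + o(\beta)$, or where the capacity-achieving code fails, together contribute only $o(n\beta\log\frac{1}{\beta})$ to the expectation. Both points rely on the exponentially small tail $2^{-\Omega(n)}$ from Theorem \ref{theorem:pivot length} and on standard capacity-achieving-code existence results, so the argument reduces to a careful accounting of error events rather than new combinatorics.
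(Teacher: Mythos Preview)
Your proposal is correct and follows essentially the same approach as the paper: reduce the residual-error problem to a BSC with crossover probability $2\beta + o(\beta)$ via Corollary~\ref{corollary:prob err} and the permutation argument, then invoke the BSC capacity to conclude that $nH(2\beta + o(\beta))$ parity bits suffice, and finally expand the entropy asymptotically. The paper's own proof is in fact a one-paragraph sketch that cites \cite{ori} and \cite{ori_ex} for the same argument; your treatment of the rare-event accounting and the entropy expansion is already more detailed than what the paper provides.
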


The proof of Lemma~\ref{lemma:moduleIII} follows the approach used to estimate the upper bound on the number of bits transmitted in the third module of \cite{ori} and \cite{ori_ex}. Since the bit error probability has already been upper-bounded using the extended result of Corollary~\ref{corollary:prob err}, we apply the capacity of a binary symmetric substitution channel to derive an upper bound on the number of bits transmitted in the Error Correction Module, thereby concluding the proof of Theorem \ref{theorem:en3}.

\subsection{Proof of Theorem \ref{theorem:pivot length}}
\label{app:pivot length proof}

The proof of Theorem \ref{theorem:pivot length} closely follows that of Theorem 5 in \cite{ori}, which establishes the relationship between pivot length and the probability of matching error for the baseline protocol. The key difference in the improved protocol is the modification of the segment length: while the baseline protocol sets \( L_S = \frac{1}{\beta} \), the improved protocol uses \( L_S = \frac{s}{\beta} \). Consequently, the expected number of deletions per segment changes from 1 to \( s \). This section outlines how this change impacts the derivation, highlighting the key modifications while omitting redundant calculations. Given the strong similarity between the two proofs, this section provides a high-level sketch of how the segment length constraint is derived, emphasizing the differences from the baseline case. For the full original proof, please refer to Theorem 5 in \cite{ori}.  

In essence, Theorem \ref{theorem:pivot length} is proven by demonstrating that, with high probability, the fraction of false pivots among the pivots selected by the decoder is at most \(\beta + o(\beta)\). To clarify the proof structure, we first revisit the matching module with additional operational details. As described in Section \ref{section:algorithm}, the decoder attempts to match the pivots sent by the encoder within its own sequence. Due to random deletions in the decoder's sequence, not all pivots have a correct match that can be found by the decoder. However, with high probability, a significant fraction of pivots will have at least one correct match. This observation is formalized in the following lemma. 

\begin{lemma}
    \label{lemma:correct_match}
    Let \( X \) be the sequence of the encoder and \( Y \) be the sequence of the decoder. Suppose the length of \( X \) is \( n \), the deletion rate is \(\beta\), and the length of each pivot is \( L_P \). Define \( R = 1 - L_P \beta + 2\beta \). Then, for a random binary string \( X \) with \(k-1\) pivots, with probability \( 1 - 2^{-\Omega(n)} \), there exist \( (R + o(\beta)) k \) pivots that have at least one correct match in \( Y \).
\end{lemma}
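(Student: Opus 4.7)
The plan is to adapt the argument for Theorem~5 in \cite{ori} to the modified setting in which the segment length is $L_S = s/\beta$ rather than $1/\beta$. The proof proceeds in two stages: first I would lower bound the probability that an individual pivot has a correct match in $Y$, and then I would apply a concentration inequality across all pivot positions.

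For the first stage, I introduce indicator random variables $Z_j$ for $j = 1, \ldots, k-1$, where $Z_j = 1$ if the $j$-th pivot of $X$ admits at least one correct match in $Y$. A correct match is guaranteed whenever the $L_P$ bits of the pivot survive the deletion process, and additional correct-match configurations arise from deletions at specific positions immediately adjacent to the pivot boundary. The first contribution is $(1-\beta)^{L_P} = 1 - L_P\beta + o(\beta)$: the pivot-length constraint forces $L_P = \Theta(\log(1/\beta))$, and the quadratic correction $\binom{L_P}{2}\beta^2 = O(L_P^2 \beta^2)$ is therefore $o(\beta)$. The boundary contributions supply the remaining $2\beta$, yielding $\Pr[Z_j = 1] \geq R + o(\beta)$ with $R = 1 - L_P\beta + 2\beta$.

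For the second stage, I set $Z = \sum_{j=1}^{k-1} Z_j$, so that $\mathbb{E}[Z] \geq (R + o(\beta))(k-1)$. The indicator $Z_j$ depends only on deletions in a local window around the $j$-th pivot, and because successive pivots are separated by full segments of length $L_S$, the dependencies between $Z_j$ and $Z_{j'}$ decay rapidly with $|j - j'|$. I would partition the pivots into a constant number of interleaved families of mutually independent indicators, apply a Chernoff bound inside each family, and union bound over families. Since the underlying randomness consists of $n$ independent Bernoulli deletions, the resulting tail estimate is exponential in $n$, giving $\Pr[Z < (R + o(\beta))k] \leq 2^{-\Omega(n)}$ as required.

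The main technical obstacle is ensuring that spurious pivot occurrences elsewhere in $Y$ do not invalidate the notion of a correct match. The pivot-length constraint $L_P \geq 3s + 8 + 2\log(1/\beta)$ is calibrated precisely so that, with $X$ uniform, the expected number of false-pivot occurrences within a window around each true pivot is $o(\beta)$ per pivot, which is exactly what makes correct matches unambiguous with high probability. The only structural change from the baseline is the enlarged spacing $L_S = s/\beta$ between consecutive pivots, which strengthens rather than weakens the independence argument between neighboring $Z_j$ indicators. Consequently the estimate of $R$ and the final concentration bound from \cite{ori} carry through verbatim, with the parameter $s$ absorbed into the relevant constants.
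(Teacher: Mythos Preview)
Your two-stage outline (single-pivot probability, then concentration) is the right approach and matches the paper, which simply records that the proof is identical to that of Lemma~6 in \cite{ori}; the segment-length change does not enter this lemma at all. Two points deserve correction, however.

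First, your account of the $+2\beta$ term is off. The additional correct-match configurations do not come from deletions \emph{adjacent} to the pivot; they come from a \emph{single deletion inside} the pivot that can be reinterpreted as a deletion outside it because a run of identical bits crosses the pivot boundary. The paper's example is $X=00010$ with pivot $001$: deleting the middle bit of $X$ yields the same $Y$ as deleting the leading bit, so under the alternative reading the pivot survives intact. Averaging this run-extension mechanism over a uniform $X$ is what produces the $+2\beta$.

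Second, your final paragraph conflates this lemma with Theorem~\ref{theorem:pivot length} (and, correspondingly, you cite Theorem~5 of \cite{ori} where the paper points to Lemma~6). Lemma~\ref{lemma:correct_match} only asserts the \emph{existence} of correct matches; false pivots elsewhere in $Y$ do not affect that count, and the full constraint $L_P \geq 3s + 8 + 2\log(1/\beta)$ is not invoked here---only $L_P = O(\log(1/\beta))$ is needed for the $o(\beta)$ remainders. The false-pivot analysis and the appearance of the segment-length multiplier $s$ in the pivot-length bound belong entirely to the proof of Theorem~\ref{theorem:pivot length}, not to this lemma.
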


Lemma \ref{lemma:correct_match} is the improved protocol's counterpart to Lemma 6 in \cite{ori}. Its proof remains identical to that of Lemma 6 in \cite{ori}. \rev{The impact of the segment length adjustment is captured in the new choice of pivot length \( L_P \) in Lemma \ref{lemma:correct_match}.}

Notably, a given pivot may have two correct matches. This seemingly counterintuitive situation arises from the definition of a correct match. Following the definition in \cite{ori}, a correct match in our setting consists of two cases. If no deletion occurs in the pivot, the corresponding sequence in the decoder is trivially the correct match of the pivot in the encoder. When a single deletion occurs within the pivot, a sequence remains a correct match if it \rev{possesses the exact sequence of the pivot and} starts at the expected position or ends at the expected position. This is because the pivot sequence remains in place, and the deletion can be interpreted as occurring outside the pivot.  

For example, consider the sequence \(00010\), where the pivot is the three-bit subsequence \(001\). If the third bit of the sequence is deleted, the resulting sequence is \(0010\), with the pivot appearing as \(01\). However, this deletion pattern is equivalent to deleting the first bit instead, which results in the same sequence \(0010\) while preserving the pivot \(001\). Thus, the pivot is considered intact, and a correct match remains even after a deletion.  

A pivot has two correct matches when its string consists entirely of either zeros or ones, the immediate undeleted bits before and after the pivot share the same value as the pivot, and exactly one deletion occurs within the pivot. In this case, the deletion can be interpreted as occurring either within the pivot or just before/after it. From the decoder’s perspective, both interpretations are valid, allowing the pivot to be matched in two ways. \rev{An example of this scenario is as follows. Consider the sequence \(00000\), where the three center bits form the pivot, and the only deleted bit is the third bit of the sequence. The resulting sequence \(0000\) can also be obtained by interpreting the deleted bit as either the first or the last bit. In both cases, the pivot remains intact.} With high probability, the number of pivots exhibiting this phenomenon is \(o(\beta k)\). This result will be used in the subsequent proof.

Given that, with high probability, there are \( (R + o(\beta)) k \) pivots with at least one correct match, the decoder aims to identify \( (R + o(\beta)) k \) pivots as the selected pivots among all detected pivots. Notably, after attempting to locate the pivots based on the received strings from the encoder, the decoder is likely to find more than \( (R + o(\beta)) k \) pivot matches. However, some of these matches are false positives—erroneous matches arising due to the pivot string randomly appearing elsewhere in the sequence.  

To mitigate this issue, the decoder employs the matching-graph algorithm described in \cite{ori} to identify the most likely \( (R + o(\beta)) k \) correct pivots. This graph-based algorithm selects a set of pivots as potential candidates if they satisfy the constraints outlined in Equation (9) of \cite{ori}. These constraints ensure that the selected pivots maintain the correct cardinal order and that the number of bits between two pivots in the decoder does not exceed the corresponding distance in the encoder's sequence, given that all edits are deletions. Ultimately, the selected pivots are the subset of \( (R + o(\beta)) k \) pivots that meet these constraints. While these constraints do not guarantee that every selected pivot is correct, they ensure that, with high probability, the error rate in pivot matching remains asymptotically small, as stated in Theorem \ref{theorem:pivot length}.

Assume that among the \( (R + o(\beta)) k \) selected pivots, \(\alpha k\) pivots are false pivots resulting from mismatches. Theorem \ref{theorem:pivot length} is proven by analyzing the probability of having \(\alpha k\) false pivots within the selected pivots for every \(\beta < \alpha \leq R + o(\beta)\). By summing over all such probabilities, we establish that the total probability of selecting \(\alpha k\) false pivots with \(\alpha > \beta\) remains bounded and asymptotically small.

First, we consider the case where \( \beta < \alpha < \frac{1}{2} \). Given a pivot matching error rate of \( \alpha \), the set of selected pivots consists of \( (R - \alpha + o(\beta)) k \) correct pivots and \( \alpha k \) false pivots. To determine the overall probability of selecting \( \alpha k \) false pivots, we first compute the number of possible configurations in which correct and false pivots can appear in the sequence. This count is then multiplied by the probability of each configuration occurring.  

We begin by estimating the number of possible arrangements of correct and false pivots in the sequence. By Lemma \ref{lemma:correct_match}, \( (R + o(\beta)) k \) pivots have a correct match in the decoder’s sequence. The \((R - \alpha + o(\beta)) k\) correct pivots must be chosen from this set. Therefore, the number of possible selections of cardinal indices for the correct pivots is given by:
\[
\binom{Rk+o(\beta)k}{Rk-\alpha k+o(\beta)k},
\]
where the correct pivots are selected from all potential correctly matched pivots in the decoder’s sequence.

For most correct pivots, their starting positions are uniquely determined by their cardinal indices, as a correctly matched pivot typically has only one valid location. However, as previously noted, \( o(\beta k) \) pivots may have two valid matches in the decoder’s sequence. In these cases, either position can be chosen as the pivot’s starting location. Consequently, the total number of possible configurations for correct pivots \(N_c\) is:
\[
N_c=\binom{Rk+o(\beta)k}{Rk-\alpha k+o(\beta)k} \cdot 2^{o(\beta)k}.
\]

We now analyze the false pivots. The cardinal indices of the false pivots can be any number except those already selected as the cardinal indices of the correct pivots. Therefore, the number of possible choices for the cardinal indices of the false pivots is given by:
\[
\binom{(1-R)k+\alpha k +o(\beta)k}{\alpha k}.
\]

Unlike correct pivots, where the starting position is typically determined by the cardinal index, the actual starting positions of false pivots are not fixed. A false pivot’s starting position is valid as long as it satisfies the two constraints outlined in Equation (9) of \cite{ori}, which were previously discussed. Given a specific selection of cardinal indices for the false pivots, the number of possible false pivot configurations arising from different starting positions is upper-bounded by:
\[
\binom{2s\alpha k(1+\frac{1}{s}+\beta L_P-2\beta+\frac{o(\beta)}{s\alpha})}{2\alpha k} \leq 2^{(3s+2)\alpha k}.
\]
Here, \( s \) represents the segment length multiplier. This bound is a variant of Equation (13) in \cite{ori}, and its derivation follows the same approach. In essence, Equation (13) in \cite{ori} is derived by counting the number of valid ways to distribute deletions across different sections. Our derivation follows the same proof strategy, with the key difference being that in the improved protocol, the segment length is scaled by \( s \). Consequently, the average number of deleted bits per section also scales by \( s \), leading to the introduction of the segment length multiplier \( s \) in the upper bound.

Thus, the upper bound on the total number of possible configurations for false pivots, denoted as \( N_f \), is given by:
\[
N_f=\binom{(1-R)k+\alpha k +o(\beta)k}{\alpha k} \cdot 2^{(3s+2)\alpha k}.
\]

With the previous derivations, the total number of possible configurations for all correct and false pivots is given by:
\[
N_{\text{conf}} = N_c N_f.
\]
For each configuration to occur, a total of \( L_P \alpha k \) bits must be fixed to ensure the existence of false pivots at the designated positions. \rev{Given that the sequence of the decoder is generated by an i.i.d. Bernoulli source with parameter \(\frac{1}{2}\), and that different sections of the decoder sequence are independent,} the probability of this event occurring is \( 2^{-L_P\alpha k} \). Therefore, the overall probability of having \(\alpha k\) false pivots in the selected pivots, denoted as \( P_{\alpha1} \), is given by:
\[
P_{\alpha1} = N_{\text{conf}} \cdot 2^{-L_P\alpha k} = N_c N_f \cdot 2^{-L_P\alpha k}.
\]
By substituting the previously derived expressions and following the same calculation steps as in \cite{ori}, the probability \( P_{\alpha1} \) can be upper-bounded as:
\begin{equation}
    P_{\alpha1} \leq 2^{\alpha k(-2\log \alpha + 3s + 7 - L_P)}.
\end{equation}
This concludes the analysis for the case where \( \beta < \alpha < \frac{1}{2} \).

Next, we focus on the case where \(\frac{1}{2} \leq \alpha \leq R + o(\beta)\). In this scenario, the number of false pivots exceeds the number of correct pivots. Following the derivation in \cite{ori}, the number of possible configurations is given by:  
\[
\binom{sk+Rk+o(\beta)k-1}{Rk+o(\beta)k-1} \leq \binom{(s+1)k}{k} \leq 2^{(s+1)k}.
\]  
\rev{Note that once again, the segment length multiplier \(s\) appears in the derivation due to the increase in the average number of deletions per section.} Since the probability of each configuration occurring remains \(2^{-L_P\alpha k}\), the probability of having \(\alpha k\) false pivots \(P_{\alpha2}\) is bounded by:
 
\[
P_{\alpha2} \leq 2^{(s+1-L_P\alpha)k}.
\]

By considering both cases together, the probability of selecting more than \(\beta k\) false pivots is given by:  
\begin{equation}
    \label{eq:int}
    \begin{split}
        &\int_{\alpha = \beta}^{\frac{1}{2}}P_{\alpha1}d\alpha k + \int_{\alpha=\frac{1}{2}}^{R+o(\beta)}P_{\alpha2}d\alpha k\\ 
        &\leq \int_{\alpha = \beta}^{\frac{1}{2}}2^{(-2\log\alpha+3s+7-L_P)\alpha k}d\alpha k \\ 
        &\quad + \int_{\alpha=\frac{1}{2}}^{R+o(\beta)}2^{(s+1-L_P\alpha)k}d\alpha k .
    \end{split}
\end{equation}  

By setting the pivot length to satisfy \(L_P\geq 3s+8+2\log \frac{1}{\beta}\), we obtain:  
\[
(-2\log\alpha+3s+7-L_P)\alpha k\leq -\alpha k \leq -\beta k, \quad \text{for } \beta < \alpha < \frac{1}{2}
\]
\[
s+1-L_P\alpha\leq -0.5s-3, \quad \text{for } \frac{1}{2} \leq \alpha \leq R + o(\beta).
\]
Therefore, the sum of the two integrals in Equation~\eqref{eq:int} can be upper-bounded as:
\begin{equation}
    \begin{split}
        &\int_{\alpha = \beta}^{\frac{1}{2}}P_{\alpha1}d\alpha k + \int_{\alpha=\frac{1}{2}}^{R+o(\beta)}P_{\alpha2}d\alpha k \\
        &\leq \frac{k}{2} \left(2^{-\beta k} + 2^{-(3+0.5s)k}\right) = 2^{-\Omega(n)}.
    \end{split}
\end{equation}
This confirms that the probability of having more than \(\beta k\) false pivots in the selected pivots is asymptotically small when the pivot length satisfies the given constraint. Thus, Theorem~\ref{theorem:pivot length} is proven.

\end{document}